\pgfplotsset{compat=1.17}
\pgfplotsset{compat=1.18}
\newcommand{\bb}[1]{\mathbb{#1}}  
\newcommand{\cc}[1]{\mathcal{#1}} 
\newcommand{\nobarfrac}[2]{\genfrac{}{}{0pt}{}{#1}{#2}}
\newcommand{\ct}{\textsf{ct}{}}
\newcommand{\sk}{\textsf{sk}{}}
\newcommand{\rlk}{\textsf{rlk}{}}
\newcommand{\rtk}{\textsf{rtk}{}}
\newcommand{\evk}{\textsf{evk}}
\newcommand{\pk}{\textsf{pk}{}} 
\newcommand{\eval}{\textsf{Eval}}
\newcommand{\REP}{\textsf{REP}{}}
\newcommand{\PE}{\textsf{PE}{}}
\newcommand{\REQ}{\textsf{ReQ}}
\newcommand{\dec}{\textsf{Dec}}
\newcommand{\enc}{\textsf{Enc}}
\begin{document}
\title{Cryptanalysis on Lightweight Verifiable Homomorphic Encryption}

\author{Jung Hee Cheon\inst{1,2}, Daehyun Jang\inst{1}}
\authorrunning{Cheon, Jang}

     \institute{
    Seoul National University, Seoul, Republic of Korea \\ 
         \email{
         \{jhcheon, jadh0309\}@snu.ac.kr
         }
         \and
         CryptoLab Inc., Seoul, Republic of Korea
     }

\maketitle 

\begin{abstract}{
    Verifiable Homomorphic Encryption (VHE) is a cryptographic technique that integrates Homomorphic Encryption (HE) with Verifiable Computation (VC). It serves as a crucial technology for ensuring both privacy and integrity in outsourced computation, where a client sends input ciphertexts $\ct$ and a function $f$ to a server and verifies the correctness of the evaluation upon receiving the evaluation result $f(\ct)$ from the server.
        
    Chatel et al.~\cite{VE} (CCS'24) introduced two VHE schemes: Replication Encoding (\REP) and Polynomial Encoding (\PE). A similar approach to \REP{} was used by Albrecht et al.~\cite{cryptoeprint:PRF} (EUROCRYPT'24) to develop a Verifiable Oblivious PRF scheme (vADDG).   
    A key approach in these schemes is to embed specific secret information within ciphertexts and use them to verify homomorphic evaluations.
    
    This paper presents efficient forgery attacks against the verifiability guarantees of these VHE schemes. We introduce two attack strategies. The first targets $\REP$ and vADDG, extracting secret information in encrypted form from input ciphertexts and leveraging it to forge output ciphertexts without being detected by the verification algorithm. The second targets $\PE$, exploiting its secret embedding structure to forge output ciphertexts that remain valid on input values for verification, yet violate the verifiability property.
     
     Our forgery attack on vADDG demonstrates that the proposed 80-bit security parameters provide at most 10 bits of concrete security. Our attack on \REP{} and \PE{} achieves a probability 1 attack with linear time complexity when using fully homomorphic encryption.
    \keywords{Verifiable Homomorphic Encryption \and Verifiable OPRF \and Forgery Attacks \and Cryptanalysis}
    }
\end{abstract}


\section{Introduction}
Verifiable Computation (VC) is a technique that guarantees the correctness of the result of a computation outsourced by a client~\cite{First_VC}. This technique allows the client to detect and prevent erroneous or malicious computations by the server. This ensures the integrity of the computations. On the other hand, Homomorphic Encryption (HE) allows computations to be performed on encrypted data without having to decrypt. HE allows the privacy of sensitive information to be maintained while complex calculations are outsourced to the server. Overall, in outsourced computing, VC ensures the integrity of the computational results, while HE guarantees the privacy of the client's data. Verifiable Homomorphic Encryption (VHE) is the combination of these two technologies.
A naive approach to VHE is the use of the SNARK or other VC techniques for homomorphic computation on encrypted data~\cite{SnarkVHE}. However, this approach results in a highly inefficient solution due to several non-arithmetic operations required in HE. These operations do not integrate seamlessly with existing VC techniques. 

Alternatively, other lines of research~\cite{VE, cryptoeprint:PRF} have explored the design of efficient verifiable homomorphic encryption schemes without combining VC techniques with HE schemes.
These schemes aim to achieve VHE at lower computational cost by concealing specific secret information within ciphertexts and precomputed values, and using them to ensure verifiability.
In this paper, we review such \textit{`lightweight'} verifiable homomorphic encryption schemes and demonstrate that these approaches have been insufficient, highlighting the need for further investigation and improvement.

    \subsection{Lightweight Verifiable Homomorphic Encryptions}
        
        \subsubsection{vADDG.}
        An Oblivious Pseudorandom Function (OPRF) is a two-party protocol between a server~$\mathcal{A}$ holding the secret PRF key~$\bf k$, and a client~$\mathcal{B}$ holding its private input~$\bf x$. At the end of the protocol, $\mathcal{B}$ obtains $F_{\bf k}({\bf x})$, while $\mathcal{A}$ learns nothing about $x$ and $\mathcal{B}$ learns nothing about $\bf k$.
        HE provides a direct solution for constructing an OPRF: 
        $\mathcal{B}$ encrypts the inputs under an HE scheme and sends the ciphertexts to $\mathcal{A}$. 
        Then, $\mathcal{A}$ homomorphically evaluates the PRF using its private key and returns the result to $\mathcal{B}$. 
        Finally, $\mathcal{B}$ decrypts the received ciphertext to obtain the output.
        
        Albrecht {\em et al.}~\cite{cryptoeprint:PRF} leveraged this idea to propose a new candidate for an OPRF, referred to as the ADDG scheme, based on the TFHE homomorphic encryption scheme~\cite{CGGI}. 
        They further extended it to a Verifiable OPRF (VOPRF), which we refer to as the vADDG scheme.
        For the verifiability extension, vADDG adopts the following method: First, $\mathcal{A}$ publishes a set of verification values (i.e., challenges) $\{\mathbf{x}^\star_k\}_{1 \leq k \leq \kappa}$ together with their PRF evaluations $\{\mathbf{z}^\star_k\}_{1 \leq k \leq \kappa}$ and corresponding zero-knowledge proofs.  
        Suppose that $\mathcal{B}$ wishes to evaluate the OPRF $F_{\mathbf{k}}$ on private inputs $\mathbf{x}_i$ ($1 \leq i \leq \alpha$) with verification. 
        $\mathcal{B}$ forms a tuple consisting of $\nu$ copies of each $\mathbf{x}_i$ and $\beta$ verification values from the pre-published set $\{\mathbf{x}^\star_k\}_{1 \leq k \leq \kappa}$. 
        This tuple, of length $\alpha\nu+\beta$, is then randomly permuted, encrypted, and finally sent to $\mathcal{A}$ and computed.
        
        To verify the integrity of the result after outsourcing, \(\mathcal{B}\) first decrypts and recovers the random permutation. Then $\cc B$ checks whether the evaluation of \(\mathbf{x}^\star_k\)'s matches that of \(\mathbf{z}^\star_k\)'s. It also checks that the copies of each \(\mathbf{z}_i\) all have the same value. Then $\cc B$ accepts the \(\mathbf{z}_i\)'s as honest results. 
        A notable feature of this (v)ADDG scheme is that the homomorphic computation of this PRF circuit requires only one level of bootstrapping depth by removing the key-switching key.

        \subsubsection{VERITAS.} 
        Chatel {\em et al.}~\cite{VE} proposed a novel VHE scheme, called {\sf Veritas}, which supports all operations in BFV and BGV. Compared to the baseline HE schemes, {\sf Veritas} introduces an overhead ranging from a factor of 1 to a two-digit factor, depending on the characteristics of the circuit. The main idea of these schemes is similar to that of vADDG scheme. For a given circuit $f$, the client holds random verification values $v_i$'s and their precomputed results $f(v_i)$'s. The client then encrypts a message along with the verification values using specially designed encoders, namely Replication Encoding (\REP) and Polynomial Encoding (\PE). If the homomorphic computation is correct, the decrypted result will contain both the precomputed results and the desired computational result. To verify the integrity of the result, the client checks whether the computed values match the previously known values. If they do, along with additional verifications, the client accepts the result as valid.

        \paragraph{Replication Encoding.} \REP{} encodes given messages and verification values as follows: Let \( n \) be a power of two. Among the slots indexed from \( 1 \) to \( n \), the verification values \( v_i \)'s are placed in certain slots indexed by \( S \subset \{1, \dots, n\} \) with \( |S| = n/2 \). Meanwhile, the other slots indexed by \( S^c = \{1, \dots, n\} \setminus S \) are repeatedly filled with the message \( m \). For example, if \( n = 4 \) and \( S = \{1, 4\} \), the client creates a vector \( (v_1, m, m, v_2) \) for a message \( m \) and challenge values \( v_1, v_2 \). We refer to the former as the verification slots and the latter as the computation slots. If the computation is correctly performed, the resulting vector will be \( (f(v_1), f(m), f(m), f(v_2)) \). The client verifies as follows: Check that all values in the verification slots match the precomputed values \( f(v_i) \)'s and that the values in the computation slots are identical. If both conditions are satisfied, the client accepts the computation result as \( f(m) \).

        \paragraph{Polynomial Encoding and ReQuadratization.} 
        \PE{}, the other encoding method in \cite{VE}, encodes a message \( \mathbf{m}\in \mathbb{Z}_t^N \) as follows: the client randomly chooses \( \alpha \gets \mathbb{Z}_t^\times \) and a verification value \( \mathbf{v} \gets \mathbb{Z}_t^N \). Next, the client interpolates \( \mathbf{m} ,\mathbf{v}\) at \( Y = 0 \) and \( Y = \alpha \), respectively. As a result, the client obtains a linear polynomial
        \( \mathbf{m} + \left(\frac{\mathbf{v} - \mathbf{m}}{\alpha}\right) Y \in \mathbb{Z}_t^N[Y] \). Then, this polynomial is coefficientwise encrypted to a ciphertext polynomial $\ct_0 + \ct_1 Y \in \mathcal{R}_q^2[Y]$. Here, $\mathcal{R}_q^2$ is the BFV ciphertext space and $\mathcal{R}_q^2[Y]$ is the set of polynomials in the variable $Y$ with coefficients in $\mathcal{R}_q^2$, and the operations on the coefficients correspond to homomorphic operations. For verification, the client checks whether \( \dec(\mathsf{F}(\alpha)) = f(\mathbf{v}) \), and if they match, the constant term \( \dec(\mathsf{F}(0)) \) is accepted as the desired computation result \( f(\mathbf{m}) \).

        However, the computation cost of \PE{} increases exponentially with the multiplication depth: squaring \(\ct_0 + \ct_1Y\) yields \(\ct'_0 + \ct'_1Y + \ct'_2Y^2\), and squaring it again results in \(\ct''_0 + \dots + \ct''_4Y^4\). This results in an exponential performance degradation. To address this issue, \cite{VE} proposed the ReQuadratization (\REQ{}), a client-aided protocol that transforms a quartic ciphertext polynomial into a quadratic ciphertext polynomial. See Section V.D and Appendix E of \cite{VE} for more details.

    \subsection{Attack Description on Lightweight VHE Schemes}
    A fundamental requirement of a VHE scheme is that the server must homomorphically evaluate the designated function $f$ on the client’s ciphertexts. 
    If the server can instead produce a valid ciphertext corresponding to another function $g \neq f$ that still passes the verification, the verifiability guarantee of the scheme is broken. 
    In our forgery attacks on lightweight VHE schemes, the server does not need to know the embedded secret information in plaintext. 
    Rather, these attacks leverage the capabilities of the HE schemes to obliviously recover the secret information in its \emph{encrypted form}, and then use it to forge valid ciphertexts.
    
        \subsubsection{Attack on vADDG.}
        For the vADDG scheme, the essential step of the forgery attack is to homomorphically recover the positions in the encrypted state where identical values appear.
        However, since the ADDG scheme utilizes the TFHE scheme which supports only a single level of bootstrapping depth, it is difficult to carry out the forgery attack using the published verification values. To address this, we propose a method for extracting positions by evaluating a characteristic function. Specifically, the adversary can construct a characteristic function that takes each TFHE ciphertext string as input and outputs an encryption of \(0\) or \(1\) and then adds that output to the ciphertext string itself. The forgery attack succeeds if this characteristic function outputs \(0\) at the verification value and outputs \(1\) at one of the computation values.
        Using this approach, we obtain a circuit that achieves a forgery probability of roughly $2^{-10}$, showing that the concrete security of the 80-bit parameter set in~\cite{cryptoeprint:PRF} is significantly weaker in practice.

        \subsubsection{Attack on Replication Encoding.}
        Similar to the previous forgery attack on vADDG, in \REP{} the server can forge the ciphertexts without knowing the verification slot, but only with their information in the encrypted state by utilizing homomorphic computation. The essential step of this attack is to identify the computation slots \(S^c\) in the encrypted state.

        The adversary can evaluate a cheating circuit to extract the positions of common values among the $n$ slots. 
        For instance, when $n=4$, the circuit
        \[
        (v_1, m, m, v_2) \mapsto (0, 1, 1, 0)
        \]
        enables the adversary to identify the verification slots in the encrypted state. 
        We design a simple circuit that extracts this information from a fresh ciphertext using homomorphic comparison over $\mathbb{Z}_t$. 
        After recovering the verification slots in the encrypted state, the adversary can forge the computational result. 
        For example, the following vector
        \[
        (f(v_1), f(m), f(m), f(v_2)) \odot (1, 0, 0, 1) \;+\; (g(v_1), g(m), g(m), g(v_2)) \odot (0, 1, 1, 0)
        \]
        places the verification values $f(v_i)$ in the verification slots and the malicious values $g(m)$ in the computation slots, where $g$ is any circuit different from $f$, possibly malicious. 

        \subsubsection{Attack on Replication Encoding with Multiple Secret Keys.}
        The above attack circuit utilizes homomorphic rotations to identify the positions of repeated values across different slots. 
        A straightforward countermeasure is to prevent such rotations by withholding the rotation key of index~1 from the server; however, this not only disables bootstrapping—often unacceptable in practice—but also depends on the heuristic assumption that the rotation key of index~1 cannot be derived from other materials.
        Instead, the attack can be mitigated by employing \emph{multiple secret keys}, thereby preventing computations across different slots. 
        This method allows bootstrapping to remain available, while simultaneously blocking the adversary from carrying out the previous attack. 
        We denote this variant of \REP{} as $\REP^{\sf msk}$.

        However, $\REP^{\sf msk}$ can also be attacked in the same manner as the attack on vADDG.
        By evaluating a characteristic function on each slot for with a randomly chosen support $A$, if a value in a slot belongs to the set \(A \subset \mathbb{Z}_t\), then after evaluation that slot will contain 1; otherwise, it will include 0. Therefore, if \(A\) contains only the message and not the verification value, this forgery attack will succeed.
        We present a pseudorandom characteristic function with a cost of \(O(\log t)\) homomorphic multiplication. When the adversary implements this characteristic function, the attack success probability becomes greater than \((e(1+n/2))^{-1}\), which is non-negligible. In contrast to the cryptanalysis on vADDG, where security could be ensured by adjusting parameters, in this case, an attack success probability of \(O(n^{-1})\) always occurs regardless of the parameter choice.

        \subsubsection{Attack on Polynomial Encoding.}
        We now present attacks on \PE{} assuming access to the \REQ{} protocol, which enables the server to perform a circuit with large multiplicative depth. In the previous attack, the verification slots \(S\), the secret of $\REP$, were recovered in an encrypted form and exploited in a forgery attack. In \PE{}, $\alpha$ is the secret, and similarly, if $\alpha$ is recovered in encrypted form, a forgery attack can be mounted. However, it is difficult to obtain a ciphertext $\ct$ containing information about $\alpha$ through homomorphic computation, since $\alpha$ is only used for interpolating random values and hence even the plaintexts reveal nothing about $\alpha$.
        \footnote{However, one might regard \(\enc(0,\dots,0) + \enc(1,\dots,1)Y \in \mathcal{R}^2_q[Y]\) as an encryption of \(\alpha\). See Discussion in Section \ref{sec: Discussion}.}.
    
        Nevertheless, unlike the previous case, there is another way to forge it: There exists a vulnerability in the encoding structure. Since the encoding method of \PE{} is an interpolation at \(Y=0\) and \(Y=\alpha\), by leveraging the algebraic properties of \(\alpha\in \bb Z_t^\times\), it is possible to interpolate the desired computed ciphertexts at \(Y=0\) and \(Y=\alpha\) without obtaining any information about \(\alpha\) or its encryption.
        
        Let \(\mathsf{F}(Y) \in \mathcal{R}^2_q[Y]\) denote the honest output requested by the client, and let \(\mathsf{G}(Y) \in \mathcal{R}^2_q[Y]\) denote any malicious output. We define \(\mathsf{H}(Y)\) as follows:
            \[
            \mathsf{H}(Y) = \mathsf{G}(Y) + \Bigl(\mathsf{F}(Y) - \mathsf{G}(Y)\Bigr) \cdot Y^{\phi(t)}.
            \]
        We design a polynomial circuit that operates on \(\mathcal{R}^2_q[Y]\) and deterministically outputs \(\mathsf{H}(Y)\) (up to the \REQ{} protocol) in $O(\log t)$. Consequently, the output \(\mathsf{H}(Y)\) will pass the client's verification, that is, \(\dec(\mathsf{H}(\alpha)) = \dec(\mathsf{F}(\alpha)) = f(\mathbf{v})\), while \(\mathsf{H}(0) = \mathsf{G}(0)\) is decrypted to yield a forged result.

    \subsection{Methodology: Homomorphic Cryptanalysis}
    The attacks presented in this paper rely on neither deep mathematics nor sophisticated arguments, but instead maximize the potential power of homomorphic computation to exploit the weakness of the schemes.
    It is valuable to examine our methodology in a more abstract way, as it may offer further applications in the design of VHE.
    
    The fundamental approach to achieving verifiability in these lightweight schemes is to hide precomputed values via additional secrets alongside the secret key of HE. Both vADDG and \REP{} adopt an index set $S$ as a secret value, which determines the slot or ciphertext position. Similarly, in the case of \PE{}, the element $\alpha$ from \( \mathbb{Z}_t^\times \) serves as the secret value. If an adversary obtains knowledge of these secret values, i.e., $S$ and $\alpha$, a forgery attack becomes trivial by manipulating the secret encryption. Therefore, these schemes conceal the secret values within ciphertexts.
    
    Our approach to attack the schemes is to obtain the secret values {\em encrypted states} and use them to generate a cheating circuit.
    For vADDG and \REP{}, the central idea of the attack is to homomorphically test whether two slots encrypt the same message.
    This enables the adversary to produce an encryption of 1 at positions corresponding to equal plaintexts via homomorphic operations. 
    In contrast, for \PE{}, the security hinges on the secrecy of $\alpha$, as verification is carried out by evaluating \( Y \) at \( \alpha \). 
    While the ReQuadratization employs a re-randomization procedure to obscure $\alpha$ and prevent its computation, our attacks bypass this defense by analyzing the encoding structure of \PE{} and devising a tailored circuit based on interpolation at \( Y=0 \) and \( Y=\alpha \).
    
    In any case, even without knowing the secret value itself, the adversary can homomorphically implement a cheating circuit by leveraging the functionalities of homomorphic encryption. 

    \subsection{Related Works}
    There are two major security concerns in information security: integrity and confidentiality. As data utilization expands beyond mere storage to computational applications, these concerns have evolved into the notions of Verifiable Computation and Homomorphic Encryption, respectively.
    To achieve VC, various cryptographic primitives have been proposed, including SNARKs~\cite{SoK}, Homomorphic MACs~\cite{HomMAc1, Hommac2}, and Homomorphic Signatures~\cite{HomomorphicSignautre}. Meanwhile, since Gentry’s groundbreaking discovery~\cite{Gen09}, HE has seen continuous advancements in both functionality and efficiency. Notable schemes such as BGV, BFV, TFHE, and CKKS~\cite{BGV, bfv, CGGI, CKKS} have significantly improved HE’s practical applicability.
    
    A natural follow-up question is whether VC and HE can be integrated, allowing both concerns to be addressed simultaneously.
    There are two major approaches to constructing Verifiable HE~\cite{FHESNARKSNARKFHE}. 
    One approach is to apply integrity mechanisms to the ciphertext operations of HE~\cite{SnarkVHE, Flexible, Efficiently, BoostingVCE, Rinocchio}. 
    Another approach is to achieve verifiability by generating SNARK-based proofs in the encrypted domain~\cite{heliopolis, HowtoProveStatementsObliviously, BlindzkSNARKsvCOED}. 
    However, adding verifiability to HE is a non-trivial task. 
    Both approaches suffer from significant efficiency limitations. Compared to native HE computations, this overhead is larger by four to five orders of magnitude.

    For the former, conventional integrity mechanisms such as SNARKs are incompatible with the non-arithmetic ciphertext operations of HE. As a result, these operations must be emulated by arithmetic operations, which incurs significant overhead when proving them.
    For the latter approach, the plaintext space must be suitable for generating proofs, which leads to inefficiency in parameter selection, and the process of homomorphically evaluating proofs introduces substantial overhead.

    The verifiable homomorphic encryption scheme targeted in this paper follows a lightweight approach different from the two major existing methods. The idea is to achieve verifiability by hiding verification secrets under the semantic security of HE and by altering the message encoding of the underlying scheme accordingly~\cite{VE, cryptoeprint:PRF, DataSeal}.
    However, there is no well-established hardness assumption or security proof that justifies a reduction from the confidentiality of HE to its integrity, making the connection between the two notions unclear.
    
    Our attack leverages the homomorphic properties of HE to forge the underlying encoding structure. This idea was implicitly suggested in~\cite{CHLR18}, and later formalized in the construction of vADDG~\cite{cryptoeprint:PRF}, where the authors proposed a naive hardness assumption that constructing such a cheating circuit within shallow depth is infeasible.
    This suggests the need for a deeper exploration of what kinds of computations are possible in various homomorphic encryption settings.


\section{Preliminaries} \label{sec 2: Preliminary}
In this section, we present the formal definitions and necessary background required for this paper. In particular, regarding the formal definition and security of the verifiable homomorphic encryption, we follow the definition from~\cite{vFHE} with the necessary modifications.
    \subsection{Homomorphic Encryption}
    Let $\cc M$ be a plaintext space and let $\cc F \subset \{f_\alpha \mid f_\alpha : \cc M^{n_\alpha} \to \cc M\}$ be a set of functions on tuples of $\cc M$.
    Then Homomorphic Encryption (over $\cc F$) is stated as follows.
    
    \begin{definition}[Homomorphic Encryption]
            A Homomorphic Encryption over $\cc F$ is a tuple of PPT algorithms 
            \[
            \Pi = (\mathsf{KeyGen, Enc,Eval,Dec})
            \]
            satisfying the following:
            \begin{itemize}
                \item $(\pk,\sk)\gets {\sf KeyGen}(1^\lambda,\cc F)$: $\pk$ includes evaluation keys $\evk$.
                \item $\ct_x\gets {\sf Enc_{pk}}(x)$ for $x\in \cc M$
                \item $\ct_y\gets{\sf Eval}_{f,\evk}(\ct_x)$ for $y=f(x)$ where $f\in \cc F.$
                \item $y\gets{\sf Dec}_{\sk}(\ct_y).$
            \end{itemize}
    \end{definition}
    \begin{definition}[$\delta$-Correcteness]\label{def:HE_correctness}
        Let $\Pi$ be a Homomorphic Encryption scheme. $\Pi$ is \emph{$\delta$-Correct} if the evaluation on the ciphertext decrypts to the correct result. For all $f$ and $x$, it should be satisfied that
        \[\Pr\left[{\sf Dec}_{\sf sk}(\ct_y)=f(x):                    \begin{aligned}
                    (\pk, \sk) &\gets \mathsf{KeyGen}(1^\lambda,\cc F)\\
                    \ct_x &\gets {\sf Enc}_{\mathsf{\pk}}(x)\\
                    \ct_y &\gets {\sf Eval}_{f, \evk}(\ct_x)
                \end{aligned}\right]\geq 1-\delta.\]
    \end{definition}

    \subsubsection{Bootstrapping.}
    In many homomorphic encryption schemes, errors accumulate with each operation. Bootstrapping is a technique used to remove these errors and refresh the ciphertext. If computations can be performed without limit, the scheme is called Fully Homomorphic Encryption (FHE); otherwise, it is known as Somewhat Homomorphic Encryption (SHE). In many cases, bootstrapping transforms SHE into FHE.
    
    \subsubsection{TFHE.}
    TFHE~\cite{CGGI} is a FHE scheme that supports gate-level bootstrapping, that the (v)ADDG scheme is instantiated with.
    Its plaintext space is $\bb Z_Q$ and the message space $\bb Z_P$ is encoded into $\bb Z_Q$ for $P<Q$. Thereafter a plaintext is encrypted into the ciphertext space $\cc C=\bb Z_Q^{n+1}$ where $n$ is determined by the security parameter. 
    
    TFHE bootstrapping requires a bootstrapping key $\sf{btk}$ for evaluation, which is an encryption of the secret key $\sk$ under a different secret key $\sk'$. This bootstrapping key transforms a ciphertext under $\sk$ into a ciphertext under $\sk'$, and a ciphertext under $\sk'$ can be switched back to one under $\sk$ by using the key-switching key $\sf{ksk}$. The TFHE bootstrapping has an additional feature beyond error refreshing. A lookup table evaluation can be implemented during the bootstrapping process. This bootstrapping technique is called Programmable Bootstrapping (PBS) or functional bootstrapping. PBS enables the homomorphic evaluation of non-linear functions in TFHE. For more details, see \cite{Joy21}.
    
    \subsubsection{BFV.}
BFV is an RLWE-based HE scheme that supports the computation of vectors of integers~\cite{Bra, bfv}, 
and it is the scheme in which \textsc{Veritas} is instantiated. 
Its message space is \( \mathbb{Z}_t^N \), where $N$ is a power of $2$ and \( t=p^r \) is a power of a prime \( p \). 
For \( q>t \), the ciphertext space is 
\(
    \mathcal{R}^2_q = \bigl(\mathbb{Z}[X]/(q,X^M+1)\bigr)^2
\)
where $M$ a power of $2$ and $N=M/d$ where $d$ is a multiplicative order of $p\in \bb Z_M^\times$.
Multiplication is given by the Hadamard product, i.e., element-wise multiplication. 

A notable feature of BFV is that it supports rotations 
\( \sigma_i:\mathbb{Z}_t^N \to \mathbb{Z}_t^N \), 
which shift vectors in \( \mathbb{Z}_t^N \) with the aid of the rotation key \( \rtk_i \). 
In addition, ciphertext–ciphertext multiplication requires the relinearization key \( \rlk \). 
As far as known, BFV bootstrapping requires the evaluation keys $\rtk_1$ and $\rlk$. 
For further technical details, refer to Appendix~\ref{Appendix BFV}.
        
    \subsection{Verifiable Homomorphic Encryption}\label{subsec:2.2}
    Verifiable Homomorphic Encryption is stated as follows. Many works in the literature introduce a tag as part of the output of $\enc$ and $\eval$, but we will treat it as part of the ciphertext, when it exists. Let $\cc M$ be a plaintext space and let $\cc F \subset \{f_\alpha \mid f_\alpha : \cc M^{n_\alpha} \to \cc M\}$ be a set of functions on tuples of $\cc M$.
        \begin{definition}
            A Verifiable Homomorphic Encryption over $\cc F$ is a tuple of PPT algorithms 
            \[
            \Pi = (\mathsf{KeyGen, Enc,Eval,Verify,Dec})
            \]
            satisfying the following:
            \begin{itemize}
                \item $\pk,\sk\gets {\sf KeyGen}(1^\lambda,\cc F)$: 
                \item $\ct_x\gets {\sf Enc_{pk}}(x)$ for $x\in \cc M$.
                \item $\ct_y\gets{\sf Eval}_{f,\evk}(\ct_x)$ for $y=f(x)$ where $f\in \cc F$.
                \item $b\gets {\sf Verify_\sk}(\ct_y)$, the client accepts if $b=0$ and rejects if $b=1$.
                \item $y\gets{\sf Dec}_{\sk}(\ct_y).$
            \end{itemize}
        \end{definition}
        
        \begin{definition}[$\delta$-Completeness]\label{def:VHE_Completness}
            A Verifiable Homomorphic Encryption scheme $\Pi$ is \emph{$\delta$-complete} if the client always accepts a correct output. For all $f$ and $x$, it should be satisfied that
            \[
                \Pr\left[
                0\gets{\sf Verify}_{\sf sk}(\ct_y):                    
                \begin{aligned}
                        \pk, \sk &\gets \mathsf{KeyGen}(1^\lambda,\cc F))\\
                        \ct_x&\gets {\sf Enc}_{\mathsf{pk}}(x)\\
                        \ct_y&\gets {\sf Eval}_{f, \evk}(\ct_x)
                \end{aligned}
                \right]\geq1-\delta
            \]
        \end{definition}
        
        \subsubsection{Malicious Adversary Model.} We now formally define the security of VHE. We assume that the underlying scheme $\Pi$ satisfies both $\delta$-correctness and $\delta$-completeness for some negligible $\delta$.
        Throughout this work, we consider two types of adversaries: malicious and covert. 
        The following definition for malicious adversaries is a modification of the corresponding definition in~\cite{vFHE}. 
        Specifically, the modification is the removal of the tag, as noted above.

        \begin{definition}[Security against Malicious Adversary, modified from \cite{vFHE}]\label{def:VHE_Soundness_Malicious}
        A Verifiable Homomorphic Encryption scheme $\Pi$ is \emph{sound in the presence of malicious adversaries} if the client rejects an incorrect output with overwhelming probability in $\lambda$ for any malicious PPT adversary $\cc A_{\sf mal}$. For all $f$, it should be satisfied that
            \[
                \Pr\left[
                \begin{aligned}
                    0\gets{\sf Verify}_{\sf sk}(\ct_y)\\
                    {\sf Dec}_{\sk}(\ct_y)\neq f(x)
                \end{aligned}
                :                    
                \begin{aligned}
                        \pk, \sk &\gets \mathsf{KeyGen}(1^\lambda,\cc F))\\
                        x& \gets \cc A^{\evk}_{\sf mal}\\
                        \ct_x &\gets {\sf Enc}_{\mathsf{pk}}(x)\\
                        \ct_y &\gets \cc A^{\evk}_{\sf mal}(\ct_x)\\
                \end{aligned}
                \right]\text{ is negligible.}
            \]         
        \end{definition}
        
        \subsubsection{Covert Adversary.}
        Originally introduced in the context of secure multiparty computation in~\cite{Covert}, the covert adversary model can be extended to the VHE setting, given that HE can be regarded as a special case of two-party computation, and was adopted as the adversary model in~\cite{VE}.
        The $\varepsilon$-covert adversary model, assumes that any adversarial deviation is detected with probability at least $\varepsilon$. A customary choice of the deterrence factor $\varepsilon$ is a noticeable probability, e.g., $1/{\sf poly}(\lambda)$. Unless otherwise stated, we use the term \emph{covert adversary} to refer to a $1/{\sf poly}(\lambda)$-covert adversary.
        In this paper, we provide an informal definition of the covert adversary model. 
        For a formal treatment, we refer the reader to Definition~3.4 from~\cite{Covert}. 

        \begin{definition}[Security against Covert Adversaries, informal]
        Let $\varepsilon(\lambda) \in [0,1]$. A Verifiable Homomorphic Encryption scheme $\Pi$ is \emph{sound in the presence of $\varepsilon$-covert adversaries} if for any PPT adversary $\mathcal{A}$, any deviation from the prescribed protocol is detected with probability at least $\varepsilon$.
        \end{definition}
        
        \subsubsection{Access to Oracles.} We assume that the adversary can access to the evaluation keys, but not to the decryption, verification, or encryption oracles. In the attack on \REP{} and \PE{}, we always assume that the adversary has $\rlk$, and in the attack on vADDG, we assume that the adversary has $\sf btk$. The only exception is that in the case of \PE{}, we assume that the adversary can access the \REQ{} protocol.
        We denote adversary models as follows: for example, $\cc A_{\sf mal}^{\rtk_1,\cc O_{\REQ}}$ denotes a malicious adversary who can access $\rtk_1$, $\rlk$, and $\REQ$, while $\cc A_{\sf cov}$ denotes an $1/{\sf poly}(\lambda)$-covert adversary who can access only $\rlk$. (Since we always assume that all adversaries can access $\rlk$, we omit it from the notation.)
        

\section{Cryptanalysis of vADDG} \label{subsec:SHE}
    \subsection{(v)ADDG Scheme}
    In \cite{cryptoeprint:PRF}, Albrecht et al. presented an Oblivious PRF candidate and its extension to a verifiable OPRF, denoted as the ADDG and vADDG schemes, respectively. The ADDG scheme utilizes TFHE to construct an OPRF candidate: the client outsources homomorphic encryption of a bit string \(\mathbf{x}\), and the server homomorphically computes \(\mathbf z=F(\mathbf{k},\mathbf{x})\) with its PRF key $\bf k$ and returns it to the client. 
    
    We briefly explain how the PRF \(F(\mathbf{k},\mathbf{x})\) is homomorphically calculated in ADDG. For the TFHE ciphertext space \(\mathcal{C}\), we denote by \([x]_P \in \mathcal{C}\) a TFHE encryption of \(x \in \mathbb{Z}_P\). For \(\mathbf{x}=(x_1,\dots,x_k)\in \mathbb{Z}_P^k\), we denote
        \[
            [\mathbf{x}]_P=([x_1]_P,[x_2]_P,\dots,[x_k]_P)\in \mathcal{C}^k.
        \]
    First, the client inputs an encryption of a 128-bit string \(\mathbf{x}\in \mathbb{Z}_2^{128}\). Then, using only homomorphic additions and constant multiplications without bootstrapping, the server homomorphically evaluates $\mathbf{y}\in \mathbb Z_2^{256}$ using the PRF key \(\mathbf{k}\).
    Next, the server performs \({\sf CPPBS}_{(2,3)}\) (Circuit Privacy Programmable Bootstrapping), a programmable bootstrapping with functionalities
        \[
            {\sf CPPBS}_{(2,3)}=\begin{cases}
            [0]_2 \mapsto [0]_3,\\[1mm]
            [1]_2 \mapsto [1]_3,
            \end{cases}
        \]
    to obtain an encryption of \(\mathbf{y}\) over \(\mathbb{Z}_3^{256}\).
    Finally, again using only homomorphic additions and constant multiplications, the server computes an encryption of an 82-trit string \(\mathbf{z}\in \mathbb{Z}_3^{82}\), which is the output of the homomorphic evaluation of the PRF.
    A notable feature of the (v)ADDG scheme is that it consumes only a bootstrapping depth of one: By the removal of the key-switching key, this scheme prevents further bootstrapping executions after the $\sf CPPBS$.
            \vspace{-0.5cm}
            \begin{figure}
                \centering
                    \[
                        \boxed{
                            \eval_{F_{\bf k}}:{[\mathbf{x}]_2^\sk}\in \cc C^{128} {\mapsto} {[\mathbf{y}]_2^\sk} \in \cc C^{256} \overset{{\sf CPPBS}_{(2,3)}}{\mapsto} {[\mathbf{y}]_3^{\sk'}}\in \cc C^{256} \mapsto {[\mathbf{z}]_3^{\sk'}} \in \cc C^{82}
                        }
                    \]
                \caption{Workflow of the Homomorphic PRF suggested in \cite{cryptoeprint:PRF}. The superscripted $\mathsf{sk}$ (resp. $\mathsf{sk'}$) denotes that the secret key is $\mathsf{sk}$ (resp. $\mathsf{sk'}$). $\bf k$ is the PRF key.} 
                \label{fig:enter-label}
                \vspace{-0.5cm}
            \end{figure}

        \subsubsection{Verifiability through Replication.}
        For the extension from ADDG to vADDG for verifiability, \cite{cryptoeprint:PRF} suggested the following method: assume that a client wishes to evaluate \(F(\mathbf{k},\mathbf{x}_i)\) for distinct inputs \(\mathbf{x}_i\) where \(i=1,\dots,\alpha\), and \(\mathbf{k}\) is the server's secret PRF key.

            \begin{enumerate}
                \item First, the server publishes $\kappa$ verification values 
                    \[
                        \mathfrak R=\{(\mathbf{x}^\star_k,\mathbf{z}^\star_k)\}_{1\leq k \leq \kappa} \quad(\mathbf{z}^\star_k=F(\mathbf{k},\mathbf{x}^\star_k))
                    \]
                in plaintext, along with zero-knowledge proofs that enable the client to verify the integrity of these results.
                
                \item The client prepares and encrypts a vector of strings 
                    \[
                        \Big(\overbrace{\mathbf{x}_1,\dots,\mathbf{x}_1}^{\nu \text{ copies} },\dots,\overbrace{\mathbf{x}_\alpha,\dots,\mathbf{x}_\alpha}^{\nu \text{ copies}},\overbrace{\mathbf{x}^\star_{k_1},\dots, \mathbf{x}^\star_{k_\beta}}^{\beta \text{ verification values}}\Big)\in(\bb Z_2^{128})^\gamma
                    \]
                where $(\mathbf{x}^\star_{k_j},\mathbf{z}^\star_{k_j})\overset{\$}{\gets} \mathfrak R$ and $\gamma=\alpha\nu+\beta$.
                
                \item Permute the \(\gamma\) ciphertexts using a random permutation \(\rho: \{1,\dots,\gamma\} \to \{1,\dots,\gamma\}\), and then send them to the server. After the server evaluate the PRF $F(\mathbf{k},\mathbf{x}_i)$ on each input $\mathbf{x}_i$, apply the inverse permutation \(\rho^{-1}\) and decrypt the ciphertexts to obtain \(\gamma\) 82-trit strings 
                \(Z=(\mathbf{z}_s) \in \left(\mathbb{Z}_3^{82}\right)^\gamma \) for \(1\leq s\leq \gamma.\)

                \item Client verifies whether the followings are correct.
                    \begin{enumerate}      
                        \item For sets $\{\mathbf{z}_1,\dots,\mathbf{z}_\nu\}$, $\{\mathbf{z}_{\nu+1},\dots,\mathbf{z}_{2\nu}\},\dots, \{\mathbf{z}_{(\alpha-1)\nu+1},\dots, \mathbf{z}_{\alpha\nu}\}$, each set contains a common single value, and all these $\alpha$ values are distinct.
                        \item $\mathbf{z}_{\alpha\nu+j}=\mathbf{z}^\star_{k_j}$
                    \end{enumerate}
            \end{enumerate}
        The suggested parameters for 80-bit security for verifiability is $(\alpha,\beta,\nu)=(105,10,11)$.

        \subsection{Attack on vADDG}
        The vADDG scheme is designed to prevent forgery attacks by leveraging the assumed hardness of building deep circuits under depth restrictions~\cite{CHLR18}, and by restricting bootstrapping to a single level. In~\cite{cryptoeprint:PRF}, the authors propose a concrete attack strategy under this model and determine secure parameters to resist such attacks, leading to the choice of \((\alpha,\beta,\nu) = (105,10,11)\). However, we show that a successful forgery remains possible under this parameter set with non-negligible probability.
    
        The idea behind this attack is to create a pseudorandom characteristic function and implement it homomorphically. To build such a function, we follow the following procedure.
        Let the adversarial server receives the inputs $\ct_s:=[\mathbf{x}_s]_2$ for all $1\leq s \leq \gamma$ where each $\mathbf{x}_s=(x_{s,u})_{1\leq u\leq 128}\in \bb Z_2^{128}$ is a 128-bit string. 
        Now design a homomorphic characteristic function $\chi_u: \cc C^{128}\to \cc C,  \ct_s\mapsto\overline{\ct_{s,u}}$ as follows:
        First, let an index $u\in \{1,\dots,128\}$ be fixed. Next, for each ciphertext string $\ct_s$, evaluate 
                $$\overline{\ct_{s,u}}:={\sf CPPBS}_{(2,3)}(\ct_{s,u}).$$
        Then, this characteristic function $\chi_u$ is a homomorphic pseudorandom function which outputs $[1]_3$ with probability approximately $1/2$, otherwise $[0]_3$.
        With these $\gamma$ outputs $(\overline{\ct_{s,u}})_{1\leq s\leq \gamma}$, the adversary can try to forge the PRF $F$ by adding $(\overline{\ct_{s,u}})_{1\leq s\leq \gamma}$ to the output strings $\left(\eval_{F(\mathbf{k,-})}(\ct_s)\right)_{1\leq s\leq \gamma}$. For example, for ${\bf t}=(1,0,\dots,0)$, we can forge the PRF by evaluating $$\eval_{F({\bf k},-)}(\ct_s)+{\bf t}\cdot \chi_u(\ct_s).$$
    Note that $\chi_u(\ct_s)$ is the same for the same inputs and so this forgery is not detected if $\chi_u({\ct_{s}})$ is 0 for the all $\beta$ verification values. The success probability is given in Theorem 1.
    
        \begin{algorithm}
            \caption{Attack on vADDG}
            \label{alg: Attack against REP under SHE}
            \begin{algorithmic}[1]
                \Procedure{$\mathsf{vADDG\_Attack}$}{$(\ct_s)_{1\leq s \leq \gamma},u$} \Comment{$u\in \{1,\dots,128\}$ \begin{flushright} $\ct_s=[\mathbf x_s]_2=([x_{s,1}]_2,\dots,[x_{s,128}]_2)$ 
                    for $1\leq s\leq\gamma$.
                \end{flushright}}
                    \State $\mathbf t\gets\bb Z_3^{82}$ for $\mathbf t\neq ({0,0,\dots,0})$ \Comment{Select the trits to forge.}
                    \For{$s = 1$ \textbf{to} $\gamma$}
                    \State $\overline{\ct_{s,u}}\gets {\sf CPPBS}_{(2,3)}([x_{s,u}]_2)$
                    \State $[\mathbf{z}_s]_3\gets\eval_{F(\bf k,-)}(\ct_s)$ \Comment{$[\mathbf z_s]_3=([z_{s,1}]_3,\dots,[z_{s,82}]_3)$.}
                    \State $\ct^{\sf forged}_{s}\gets[{\bf z}_{s}]_3 + \overline{\ct_{s,u}}\cdot \mathbf{t}$  
                    \EndFor
                    \State \textbf{return} $\left(\ct_{s}^{\sf forged}\right)_{1\leq s\leq \gamma}$
                \EndProcedure
            \end{algorithmic}
        \end{algorithm}
        \begin{theorem}
                For a given parameter $(\alpha, \beta, \nu)$, the expected probability that the bootstrapping-depth one circuit in Algorithm~\ref{alg: Attack against REP under SHE} successfully forges the output is 
                    \[
                        \left(1-\left(1/2\right)^\alpha\right)\cdot \left(1/2\right)^\beta.
                    \] over the all random inputs.
        \end{theorem}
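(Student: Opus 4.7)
My plan is to analyze the success event of Algorithm~\ref{alg: Attack against REP under SHE} by decomposing it into two independent sub-events: (i) the verification algorithm accepts the forged output, and (ii) the forged output actually differs from the honest PRF evaluation on at least one computation input. Since the forgery shifts $[\mathbf{z}_s]_3$ by $[x_{s,u}]_3 \cdot \mathbf{t}$, the crucial observation is that the offset is entirely controlled by a single bit of the input, namely $x_{s,u}$, and the $\gamma$ bits $\{x_{s,u}\}_{1 \leq s \leq \gamma}$ are independent and uniform because the client's inputs and the public verification values are drawn uniformly.

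First I will treat the acceptance side. The verifier checks (a) that each block of $\nu$ replicated slots decrypts to a common value with the $\alpha$ block-values mutually distinct, and (b) that the $\beta$ verification slots decrypt to the precomputed $\mathbf{z}^\star_{k_j}$. For (a), I note that within any block the $\nu$ ciphertexts $\ct_s$ encrypt the same $\mathbf{x}_i$; since ${\sf CPPBS}_{(2,3)}$ is deterministic on its plaintext value, $\chi_u$ outputs the same $[x_{i,u}]_3$ across the block, so the shifted values $[\mathbf{z}_i]_3 + [x_{i,u}]_3 \cdot \mathbf{t}$ remain identical within each block. The distinctness of the $\alpha$ block-values is preserved except on a negligible event over random PRF outputs (a collision would require $\mathbf{z}_i - \mathbf{z}_j = \pm \mathbf{t}$ for distinct PRF outputs), which I will absorb into the leading constants. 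For (b), the shift on slot $\alpha\nu + j$ equals $x^\star_{k_j, u} \cdot \mathbf{t}$; the verification passes iff $x^\star_{k_j, u} = 0$ for every $j \in \{1,\dots,\beta\}$. Over the uniform choice of the $\beta$ verification values, this occurs with probability exactly $(1/2)^\beta$.

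Next I handle the non-triviality side. The forgery is non-trivial precisely when the server's output decrypts to something other than the honest $F(\mathbf{k}, \mathbf{x}_i)$ for at least one block, equivalently when $x_{i,u} = 1$ for some $i \in \{1,\dots,\alpha\}$. Over uniform client inputs, the complementary event (every $x_{i,u} = 0$) has probability $(1/2)^\alpha$, so the success probability on the computation side is $1 - (1/2)^\alpha$. Finally, because the bit $x_{i,u}$ of a client input and the bit $x^\star_{k_j,u}$ of a verification value are drawn from disjoint, independent random strings, the two events are independent, and multiplying gives $(1 - (1/2)^\alpha)(1/2)^\beta$, as claimed.

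The main obstacle I anticipate is not the probability calculation itself but justifying that no other verification check is violated by the shift, in particular that the $\alpha$-block distinctness from check (a) survives with overwhelming probability and therefore does not alter the leading-order success probability. I would handle this by a brief union bound argument over the $\binom{\alpha}{2}$ block pairs, using the fact that the $\mathbf{z}_i$ are essentially uniform $82$-trit PRF outputs, so $\Pr[\mathbf{z}_i - \mathbf{z}_j = \pm \mathbf{t}] \leq 2 \cdot 3^{-82}$, which is negligible and hence absorbed into the stated expression.
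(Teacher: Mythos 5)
Your proof is correct and follows essentially the same route as the paper: the success event is ``all $\beta$ verification bits $x^\star_{k_j,u}$ are $0$ and at least one of the $\alpha$ message bits $x_{i,u}$ is $1$,'' giving $(1/2)^\beta\bigl(1-(1/2)^\alpha\bigr)$, which the paper writes equivalently as $(1-p)^\beta-(1-p)^{\alpha+\beta}$ at $p=1/2$. Your additional checks --- that the $\nu$ replicas within a block receive identical shifts so check (a) still passes, and the union bound showing the block-distinctness condition survives except with negligible probability --- are details the paper's proof leaves implicit, but they do not change the argument.
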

        \begin{proof}
                The forgery succeeds when $\overline{\ct_{s,u}}=[0]_3$ for all $s$ correspond to the $\beta$ verification values and at least one $\overline{\ct_{s,u}}=[1]_3$ for $s$ corresponds to $\alpha$ message values. Let $p$ be the probability that $\overline{\ct_{s,u}}$ became an encryption of 1. In Algorithm \ref{alg: Attack against REP under SHE}, the expected value of $p$ over all input values is $1/2$. The attack success probability is obtained by subtracting the probability that all values become zero, $\left(1-p\right)^{\alpha+\beta}$, from the probability that all $s$ corresponding to $\beta$ become zero, $(1-p)^\beta$.  
                Thus, the success probability is given by $Q(p):=\left(1-p\right)^\beta - \left(1-p\right)^{\alpha+\beta}.$
                Substituting $p=1/2$, we have the desired result
                \(
                \left(1-\left(1/2\right)^\alpha\right)\cdot \left(1/2\right)^\beta. 
                \)
        \end{proof} 
        
        \begin{corollary}
            The {\rm vADDG} scheme with parameters $(\alpha, \beta, \nu) = (105, 10, 11)$ provides at most 10-bit security against a malicious adversary.
        \end{corollary}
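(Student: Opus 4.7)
The plan is to instantiate Theorem~1 at the proposed parameter set and interpret the resulting forgery probability as a concrete security bound against a malicious adversary. Since Algorithm~1 only uses one $\sf CPPBS_{(2,3)}$ followed by additions and constant multiplications, it is realizable within the bootstrapping-depth-one restriction of vADDG and only needs access to $\sf btk$, so it fits the malicious adversary model of Definition~3 (Security against Malicious Adversary).

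First, I would substitute $\alpha=105$ and $\beta=10$ into the success probability $Q(1/2)=\left(1-(1/2)^{\alpha}\right)\cdot(1/2)^{\beta}$ obtained in Theorem~1. The factor $1-(1/2)^{105}$ is overwhelmingly close to $1$ (it differs from $1$ by a quantity of order $2^{-105}$, which is cryptographically negligible), so the whole expression is essentially $(1/2)^{10}=2^{-10}$. This means Algorithm~1 outputs a forgery that (i) decrypts to a string differing from $F(\mathbf{k},\mathbf{x}_i)$ in the chosen trit pattern $\mathbf t$ on at least one of the message slots and (ii) leaves all $\beta=10$ verification slots unchanged, with probability at least $2^{-10}-2^{-115}$, which exceeds $2^{-11}$.

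Next, I would translate this probability into a bit-security statement. Since the running time of Algorithm~1 is linear in $\gamma=\alpha\nu+\beta$ and uses only a single $\sf CPPBS_{(2,3)}$ per ciphertext string plus additions, the overall cost is negligible compared to $2^{10}$ elementary operations; the advantage-to-time ratio is therefore at least $2^{-10}/{\sf poly}(\lambda)$. Under the standard convention that $\kappa$-bit security requires every PPT adversary to have advantage-to-time ratio at most $2^{-\kappa}$, this exhibits a malicious PPT adversary breaking soundness (Definition~3) at the level $\approx 2^{-10}$, so the parameter set $(\alpha,\beta,\nu)=(105,10,11)$ offers at most $10$ bits of concrete security, proving the corollary.

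The only subtle point, and what I would treat most carefully, is justifying that $p=1/2$ is the \emph{expected} outcome probability of $\overline{\ct_{s,u}}=[1]_3$ over random honestly generated inputs, rather than a worst-case value: this is inherited directly from Theorem~1's averaging argument and from the fact that the bit $x_{s,u}$ of a random PRF input (or of a random pre-published verification value) is unbiased. Once that is invoked, the rest is a straightforward arithmetic substitution, and no further obstacle arises.
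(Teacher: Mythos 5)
Your proposal is correct and matches the paper's (implicit) argument: the corollary is stated as a direct consequence of Theorem~1, obtained by substituting $(\alpha,\beta)=(105,10)$ into $\left(1-(1/2)^{\alpha}\right)\cdot(1/2)^{\beta}\approx 2^{-10}$ and noting that the attack circuit has negligible cost and fits the depth-one, $\mathsf{btk}$-only malicious adversary model. Your additional care about the averaging over random inputs and the advantage-to-time convention is sound but not a departure from the paper's route.
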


        In the covert adversary setting, this forgery is detected if at least one of the $\beta$ verification values is altered, which occurs with probability $1 - (1/2)^{\beta}$. Therefore, a covert adversary cannot successfully carry out this attack.

        \subsubsection{Parameter Selection for vADDG.} 
        The success probability of this attack does not change with an increase in the parameter $\nu$. 
        Also, the probability does not change significantly with a decrease in the parameter $\alpha$, since 
            $Q(p) = (1 - (1/2)^\alpha)(1/2)^\beta$
        has a lower bound of 
        $(1/2)\cdot(1/2)^\beta$ and an upper bound of $(1/2)^\beta$, which are both independent of $\alpha$. Thus in our discussion of parameters, we only concerns $\beta$. To ensure $80$-bit security, $\beta$ must exceed 80.
        
        In \cite{cryptoeprint:PRF}, other cheating strategies are mentioned with success probabilities of 
        \(
        1/\binom{\gamma}{\beta} \text{ and } \alpha/\binom{\gamma}{\nu}.
        \)
        By setting \(\beta = 80\), we obtain the following two new parameter sets that satisfy 80-bit security:
        \[
        (\alpha,\beta,\nu) = {(28440, 80, 6)} \text{, or } (189, 80, 10).
        \]
        The first parameter has an overhead size ($(\alpha\nu+\beta )/ \alpha$) of approximately {\(\times 6.0\)} but requires a larger set of inputs, meanwhile the second parameter has an overhead size of approximately {\(\times 10.4\)} but requires a smaller set of inputs.
    

\section{Cryptanalysis on Replication Encoding} \label{sec:attacks on rep}
We recall the Replication Encoding (\REP) and its verification procedure described in \cite{VE}.
For the convenience of presentation, we describe our attack assuming the evaluation of a single-input and single-output polynomial
$f:\mathbb{Z}_t \to \mathbb{Z}_t$ on a ciphertext encrypted under the HE scheme
$\enc: \mathbb{Z}_t^N \rightarrow \mathcal{R}_q^2$, where $t=p$ is a prime.
However, it is straightforward to extend our attack to a multivariate polynomial circuit
$f: \mathbb{Z}_t^\mu \rightarrow \mathbb{Z}_t^\nu$ for $\mu,\nu \in \mathbb{Z}^+$,
where an element of $\mathbb{Z}_t^\mu$ is regarded as an extended slot element.
For the case $t=p^r$ with $r>1$, a similar idea can be applied to construct an analogous attack. We defer the details for $t=p^r$ to Appendix~\ref{Appendix A}.

    \subsection{Replication Encoding}
    Let $f:\mathbb{Z}_t \to \mathbb{Z}_t$ be a circuit that the client requested to evaluate. The \REP{} proceeds as follows. 
        \begin{enumerate}
            \item The client randomly selects a subset $S \subset \{1, \dots, n\}$ with size $|S|=n/2$ for a positive integer $n$ dividing $N$.
            \item The slots indexed by $S^c=\{1,\dots, n\}\setminus S$ are filled with a same message $m$, and the complement slots indexed by $S$ are filled with $n/2$ random values $v_i\overset{\$}{\gets} \bb Z_t$. 
        \end{enumerate}
    We call the slots indexed by $S$ as \emph{verification slots}, and the slots indexed by $S^c$ as \emph{computation slots}. 
    After evaluating $f$ to the ciphertext, its decryption contains $f(v_i)$'s in the verification slots and $f(m)$ in the computation slots.
    For verification, it is assumed that the client knows $f(v_i)$'s and their integrity in advance, either by outsourcing the computation of $f(v_i)$'s without encryption to a third party using existing verifiable computation techniques or by securely computing them himself. Note that the verification values are not published, unlike vADDG. The client verifies as follows:
        \begin{enumerate}
            \item Decrypt the ciphertext to get a vector $(z_i)$ for $i=1,\dots, n$.
            \item Check whether $z_i=f(v_i)$ for verification slots $i\in S$.
            \item Check whether $z_i$ are equal for computation slots $i\in S^c$.
        \end{enumerate}
    If these conditions are satisfied, the common value in the computation slots is accepted as $f(m)$.
    See \cite{VE} for more details.
 
    \subsection{Attacks on \REP} \label{subsec: Attack on the REP}
    Learning or guessing $S\subset\{1,\dots,n\}$ is not easy. Instead of recovering $S$, we take a similar approach to the attack on vADDG: we learn $S$ in \emph{encrypted state} by computing $\enc(J_S)$ for $J_S:=(j_1,\dots,j_n)$ where $j_i=1$ if $i\in S$ and $j_i=0$ otherwise. 
    Once the adversary recovers $\enc(J_S)$\footnote{Note that $J_{S^c} = (1,\dots,1) - J_S$, thus recovering $\enc(J_{S^c})$ is equivalent to recovering $\enc(J_S)$.}, the ciphertext can be easily forged by computing the following:
        \begin{align} \label{ctforged formula}
            \ct_{\textsf{forged}}:=\mathsf{Mult}(\enc(J_S),\eval_f(\ct)) + \mathsf{Mult}(\enc(J_{S^c}),\eval_g(\ct)),
        \end{align}
    where $f$ is the circuit that the client requested to evaluate, and $g$ is any modified circuit by the adversary. 
    Then $\ct_{\textsf{forged}}$ provides the client with a malicious result $g(m)$ but successfully passes the verification step.
    
    To recover $\enc(J_{S^c})$, we designed a circuit to extract the position vector of the common values. 
    Figure \ref{fig: recover S} shows the functionality of the desired algorithm; 
    If the $i$th slot is chosen, the output of this algorithm is the position vector of the slots equal to this value.
    We do not consider the case where $v_i = m$ for some $i \in S$, as in such case the client would already have attained $f(v_i) = f(m)$.

        \begin{figure}
                \centering
                \begin{tikzpicture}
                    \def\squareSize{0.65cm}
                    \node[draw, rectangle, minimum size=\squareSize, fill=gray!50] (a1) at (0, 0) {$m$};
                    \node[draw, rectangle, minimum size=\squareSize] (a2) at (\squareSize, 0) {$v_1$};
                    \node[draw, rectangle, minimum size=\squareSize] (a3) at (2*\squareSize, 0) {$v_2$};
                    \node[draw, rectangle, minimum size=\squareSize] (a4) at (3*\squareSize, 0) {$m$};
                    \node[draw, rectangle, minimum size=\squareSize] (a3) at (4*\squareSize, 0) {$v_3$};
                    \node[draw, rectangle, minimum size=\squareSize] (a4) at (5*\squareSize, 0) {$m$};   
                    \node[draw, rectangle, minimum size=\squareSize] (a3) at (6*\squareSize, 0) {$v_4$};
                    \node[draw, rectangle, minimum size=\squareSize] (aw) at (7*\squareSize, 0) {$m$};
                    
                    \node[draw, rectangle, minimum size=\squareSize] (be) at (10*\squareSize, 0) {$1$};
                    \node[draw, rectangle, minimum size=\squareSize] (b2) at (11*\squareSize, 0) {$0$};
                    \node[draw, rectangle, minimum size=\squareSize] (b3) at (12*\squareSize, 0) {$0$};
                    \node[draw, rectangle, minimum size=\squareSize] (b4) at (13*\squareSize, 0) {$1$};
                    \node[draw, rectangle, minimum size=\squareSize] (b5) at (14*\squareSize, 0) {$0$};
                    \node[draw, rectangle, minimum size=\squareSize] (b6) at (15*\squareSize, 0) {$1$};
                    \node[draw, rectangle, minimum size=\squareSize] (b7) at (16*\squareSize, 0) {$0$};
                    \node[draw, rectangle, minimum size=\squareSize] (b8) at (17*\squareSize, 0) {$1$};
                
                    \draw[->] (aw.east) -- (be.west);

                    \node[draw, rectangle, minimum size=\squareSize] (a1) at (0*\squareSize, -1.1) {$m$};
                    \node[draw, rectangle, minimum size=\squareSize, fill=gray!50] (a2) at (1*\squareSize, -1.1) {${v_1}$};
                    \node[draw, rectangle, minimum size=\squareSize] (a3) at (2*\squareSize, -1.1) {$v_2$};
                    \node[draw, rectangle, minimum size=\squareSize] (a4) at (3*\squareSize, -1.1) {$m$};
                    \node[draw, rectangle, minimum size=\squareSize] (a3) at (4*\squareSize, -1.1) {$v_3$};
                    \node[draw, rectangle, minimum size=\squareSize] (a4) at (5*\squareSize, -1.1) {$m$};    
                    \node[draw, rectangle, minimum size=\squareSize] (a3) at (6*\squareSize, -1.1) {$v_4$};
                    \node[draw, rectangle, minimum size=\squareSize] (cw) at (7*\squareSize, -1.1) {$m$};
                    
                    \node[draw, rectangle, minimum size=\squareSize] (ce) at (10*\squareSize, -1.1) {$0$};
                    \node[draw, rectangle, minimum size=\squareSize] (b2) at (11*\squareSize, -1.1) {$1$};
                    \node[draw, rectangle, minimum size=\squareSize] (b3) at (12*\squareSize, -1.1) {$0$};
                    \node[draw, rectangle, minimum size=\squareSize] (b4) at (13*\squareSize, -1.1) {$0$};
                    \node[draw, rectangle, minimum size=\squareSize] (b5) at (14*\squareSize, -1.1) {$0$};
                    \node[draw, rectangle, minimum size=\squareSize] (b6) at (15*\squareSize, -1.1) {$0$};
                    \node[draw, rectangle, minimum size=\squareSize] (b7) at (16*\squareSize, -1.1) {$0$};
                    \node[draw, rectangle, minimum size=\squareSize] (b8) at (17*\squareSize, -1.1) {$0$};
                
                    \draw[->] (cw.east) -- (ce.west);
                    
                    \node at (8.5*\squareSize, -0.8){$i=2$};
                    \node at (8.5*\squareSize, 0.3){$i=1$};
                \end{tikzpicture}
                \caption{Finding Common Values among $n$ Slots.} \label{fig: recover S}
            \end{figure}

    We first define two functions $\textsf{Duplicate}_i$ and $\textsf{Compare}$ as follows:
        \begin{itemize}
            \item $\textsf{Duplicate}_i: \bb Z_t^n \to \bb Z_t^n$ is a function that maps $(x_1, \dots, x_i, \dots, x_n) \mapsto (x_i, \dots, x_i)$
            \item $\textsf{Compare}(x,y):\bb Z_t^2\to \bb Z_t$ is $1$ if $x=y$ and $0$ if $x\neq y$. We can extend it to act on vectors componentwise manner, $\textsf{Compare}: \bb Z_t^{2n}\to \bb Z_t^n$ 
        \end{itemize}
    Also we extend $\textsf{Compare}$ to act on vectors componentwise manner. 
    We note that Fermat's little theorem guarantees $1-(x-y)^{t-1}=\textsf{Compare}(x,y)$ when $t$ is a prime, which needs $\lceil \log t \rceil$ multiplications to evaluate.
    Now we introduce Algorithm~\ref{alg: common value position} to construct the desired circuit and describe its properties.
    
        \begin{algorithm}
            \caption{Finding Common Value Slots}
            \label{alg: common value position}
            \begin{algorithmic}[1]
            \State \textbf{Input:} A ciphertext $\ct$ encrypted via $\REP$ and an index $i$.
            \State \textbf{Output:} A ciphertext $\ct_{\mathsf{CVS},i}$, possibly an encryption of the vector $J_{S^c}$ if $i \in S^c$.
                \Procedure{$\mathsf{CVS}$}{$\ct,i$} 
                    \State $\ct'\gets\mathsf{\eval_{Duplicate}}_i(\ct)$ \Comment{$\ct'=\enc((x_i,\dots,x_i))$}
                    \State $\ct_{\textsf{CVS},i}\gets\eval_\mathsf{Compare}(\ct',\ct)$
                    \State \textbf{return} $\ct_{\mathsf{CVS},i}$ 
                \EndProcedure
            \end{algorithmic}
        \end{algorithm}
        \begin{theorem}
            For uniformly random $i\overset{\$}{\gets}\{1,\dots,n\}$, {$\mathsf{CVS}(\mathsf{\ct},i)$} in Algorithm \ref{alg: common value position} outputs $\mathsf{Enc}(J_{S^c})$ with probability of $|S|/n=1/2$ in $O(\log t)$ homomorphic multiplications and rotations. 
        \end{theorem}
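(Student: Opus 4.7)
The plan is to analyze $\mathsf{CVS}(\ct,i)$ at the plaintext level and split into cases based on whether the chosen index $i$ lies in $S$ or in $S^c$. Denote the plaintext vector of $\ct$ by $(x_1,\dots,x_n)\in \bb Z_t^n$ with $x_j=v_j$ for $j\in S$ and $x_j=m$ for $j\in S^c$. In the favorable case $i\in S^c$, we have $x_i=m$, so $\eval_{\mathsf{Duplicate}_i}(\ct)$ decrypts to $(m,m,\dots,m)$, and a componentwise comparison with $(x_1,\dots,x_n)$ produces a vector whose $j$-th entry equals $1$ exactly when $x_j=m$. Under the standing assumption noted in the paper that $v_j\neq m$ for every $j\in S$, this indicator is precisely $J_{S^c}$, so $\mathsf{CVS}(\ct,i)$ decrypts to $J_{S^c}$.

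In the complementary case $i\in S$, the broadcast vector $(x_i,\dots,x_i)$ equals a single $v_j$ repeated across all slots. Componentwise comparison then yields $1$ only at positions whose plaintext equals $v_j$; generically this is only the slot $i$ itself, giving a vector distinct from $J_{S^c}$, so the algorithm fails. Since $i$ is drawn uniformly and independently of the random set $S$, the success event $\{i\in S^c\}$ has probability $|S^c|/n=1/2$, matching the claim.

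For the cost, I would account for the two building blocks separately. The broadcast $\mathsf{Duplicate}_i$ can be realized with $O(\log n)$ homomorphic rotations via the standard repeated-doubling pattern, and the $\mathsf{Compare}$ circuit is instantiated through Fermat's little theorem as $1-(x-y)^{t-1}$, whose evaluation costs $O(\log t)$ homomorphic multiplications by repeated squaring. In the usual regime $n=\mathrm{poly}(t)$ both contributions are absorbed into $O(\log t)$, yielding the stated complexity in homomorphic multiplications and rotations.

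The only subtle point, and the one I would be careful to isolate, is the side condition $v_j\neq m$ used in the case $i\in S^c$. Without it the comparison would also return $1$ in some verification slot, breaking the equality with $J_{S^c}$. This is precisely the degenerate case the paper explicitly rules out, since if $v_j=m$ for some $j\in S$ then the client has already obtained $f(m)=f(v_j)$ from the outsourced verification values and no forgery is needed. Hence the analysis is complete under the paper's convention.
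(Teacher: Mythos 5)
Your proposal is correct and follows essentially the same route as the paper's own proof: a case split on $i\in S^c$ versus $i\in S$, with $\mathsf{Duplicate}_i$ costing $O(\log n)$ rotations and $\mathsf{Compare}$ costing $O(\log t)$ multiplications via Fermat's little theorem (the paper absorbs the rotations using $n\le\phi(t)$, which is a cleaner justification than your ``usual regime'' remark but amounts to the same bound). Your explicit handling of the side condition $v_j\neq m$ is a welcome addition that the paper only states outside the proof.
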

        \begin{proof}  
            Let $\ct$ be an encryption of a vector that has $m$ in computational slots $S^c$.
            If $i\in S^c$, then $\ct'\gets\eval_{\mathsf {Duplicate}_i}(\ct)$ is an encryption of $(m,\dots,m)$. Consequently, $\mathsf{Eval_{Compare}}(\ct',\ct)$ returns $\enc(J_{S^c})$.
 
            $\eval_{\textsf{Duplicate}_i}$ requires one masking by the $i$-th elementary vector and a partial rotation sum within each message slot of length $n$ that takes $\lceil \log n \rceil$ homomorphic rotation. 
            Also we showed that \eval$_{\textsf{Compare}}$ needs $O(\log t)$ homomorphic multiplications. 
            Since $n\le \phi(t)$, the total complexity is $O(\log t)$ homomorphic multiplications/rotations.
        \end{proof}

        Thus, $\REP$ is not sound in the presence of the malicious adversary with rotation key with index 1.
        \begin{corollary}\label{cor_covert}
            $\REP$ is not sound in the presence of $\cc A_{\sf mal}^{\sf rtk_1}$.
        \end{corollary}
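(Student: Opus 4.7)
The plan is to exhibit a malicious PPT adversary $\cc A_{\sf mal}^{\sf rtk_1}$ that, given a single honestly generated challenge ciphertext $\ct_x$, produces a forged $\ct_y$ which passes $\mathsf{Verify}_{\sk}$ but satisfies $\dec_{\sk}(\ct_y) \neq f(x)$ with probability at least $1/2$, directly violating Definition~\ref{def:VHE_Soundness_Malicious}.

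The adversary first picks an arbitrary index $i \in \{1,\dots,n\}$ (uniformly at random) and invokes the subroutine $\mathsf{CVS}(\ct_x, i)$ of Algorithm~\ref{alg: common value position}, which is implementable using only $\rtk_1$ and $\rlk$. Because the encoder draws $S$ uniformly from the size-$n/2$ subsets of $\{1,\dots,n\}$ and the computation slots are indexed by $S^c$, the event $i \in S^c$ occurs with probability $1/2$. Conditioned on this event, the preceding theorem yields $\ct_{\mathsf{CVS}} = \enc(J_{S^c})$, so $\enc(J_S) = \enc((1,\dots,1)) - \ct_{\mathsf{CVS}}$ is also obtained in encrypted form.

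Next, the adversary chooses any circuit $g \neq f$ (for instance $g(x) := f(x) + 1 \bmod t$, so that $g(x) \neq f(x)$ for all $x$) and outputs the forgery prescribed by Equation~\eqref{ctforged formula}:
\[
\ct_y \;:=\; \mathsf{Mult}\bigl(\enc(J_S),\,\eval_f(\ct_x)\bigr) \;+\; \mathsf{Mult}\bigl(\ct_{\mathsf{CVS}},\,\eval_g(\ct_x)\bigr).
\]
Conditioned on the success of $\mathsf{CVS}$, the slot-wise decryption of $\ct_y$ equals $f(v_j)$ for $j \in S$ and the common value $g(x)$ for $j \in S^c$. Hence both verification checks of \REP{} — equality with the precomputed $f(v_j)$'s on verification slots and mutual equality across computation slots — pass by construction, while $\dec_{\sk}(\ct_y)$ returns $g(x) \neq f(x)$.

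The only delicate point is the probability bookkeeping: the adversary cannot homomorphically test whether its committed $i$ lies in $S^c$, so the overall success probability is exactly $1/2$ over the encoder's random choice of $S$, which is constant and therefore non-negligible in $\lambda$; this alone suffices to falsify Definition~\ref{def:VHE_Soundness_Malicious}. One could amplify to $1-o(1)$ by running $\mathsf{CVS}(\ct_x, i)$ for all $i \in \{1,\dots,n\}$, summing the results (which yields $\frac{n}{2}$ in computation slots and $1$ in verification slots), and applying a homomorphic comparison against $n/2$ to recover $\enc(J_{S^c})$ deterministically, but this refinement is unnecessary for the corollary.
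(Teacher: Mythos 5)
Your proposal is correct and follows essentially the same route as the paper: Corollary~\ref{cor_covert} is derived there directly from the preceding theorem on $\mathsf{CVS}$ (success probability $1/2$ over the choice of $i$) combined with the forgery formula~\eqref{ctforged formula}, and a constant success probability already contradicts Definition~\ref{def:VHE_Soundness_Malicious}. Your closing remark about amplification is exactly what the paper develops afterwards as Algorithm~\ref{alg: deterministic attack on rep}, so nothing is missing.
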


        If the attack fails, (i.e. $i\in S$) then the output of $\ct_{\sf forged}$ of the forgery fomular~\ref{ctforged formula} where $\enc(J_{S^c})\gets\mathsf{CVS}(\mathsf{\ct},i)$ is detected by the client. Thus this attack does not imply that the $\REP$ is not sound in the presence of the covert adversary. 

    \subsubsection{Increasing Attack Success Probability.}
        
    Now we show how to increase the success probability of the attack by repeating Algorithm \ref{alg: common value position} for different indices $i$.
    For any fixed $i$, if we sum up the values in the $n$ slots of $\ct_{\textsf{cvs},i}$, it outputs $|S|$ or $1$ depending on whether $i\in S$ or not. That is, the server can {\em homomorphically distinguish} whether $i\in S$ or not.
    Therefore, an adversary can construct a cheating circuit (Algorithm~\ref{alg: deterministic attack on rep}) that deterministically outputs $\enc(J_{S^c})$. 
    To design such a circuit, first we define three functions $\textsf{RotSum},\textsf{Interpolate}$ and $\textsf{Normalize}_{k}$ as follows: 
        \begin{align*}
            \textsf{RotSum}&:\;(x_1,\dots,x_n) \mapsto \left(\sum_{i=1}^n x_i,\dots, \sum_{i=1}^n x_i\right)\\
            \textsf{Interpolate}&:\begin{cases}
            1\mapsto 0\\ n/2\mapsto 1\\ \text{otherwise anywhere.}\\    
            \end{cases}
            \textsf{Normalize}_k: \begin{cases}
                0\mapsto 0\\ 1,\dots,k \mapsto 1, \\ \text{otherwise anywhere}\\
            \end{cases}
        \end{align*}
    Note that $\eval_{\textsf{RotSum}}$ can be evaluated using $\lceil \log n \rceil$ homomorphic rotations, while $\eval_{\textsf{Interpolate}}$ and $\eval_{\textsf{Normalize}_k}$ can be evaluated using $O(1)$ and $O(\log k)$ homomorphic multiplications, respectively, when the plaintext modulus $t$ is prime.

        \begin{algorithm}
            \caption{Deterministic Recovery of Encryption of $S$}
            \label{alg: deterministic attack on rep}
            \begin{algorithmic}[1]
            \State \textbf{Input:} A ciphertext $\ct$ encrypted via $\REP$ and an index $i$.
            \State \textbf{Output:} A ciphertext $\ct_{\textsf{bool},i}$ which is an encryption of $0^N$ if $i\in S$ and $1^N$ otherwise.
                \Procedure{$\mathsf{is\_in\_set}_S$}{$\ct,i$}
                    \State $\ct_{\textsf{cvs},i}\gets \textsf{CVS}(\ct,i)$
                    \State $\ct_{\textsf{rs}}\gets\textsf{RotSum}(\ct_{\textsf{cvs},i})$ \Comment{$\ct_{\textsf{rs}}=\enc(1,\dots,1)$ or $\enc(n/2,\dots,n/2)$}
                    \State $\ct_{\textsf{bool},i}\gets \eval_{\mathsf{Interpolate}}(\ct_{\textsf{rs}})$
                    \State \textbf{return} $\ct_{\textsf{bool},i}$ \Comment{$\ct_{\textsf{bool},i}=\enc(0,\dots,0)$ or $\enc(1,\dots,1)$}
                \EndProcedure
                    \State
                \State \textbf{Input:} A ciphertext $\ct$ encrypted via $\REP$ and an index $i$, an integer $k$.
                \State \textbf{Output:} A ciphertext $\ct$ possibly encrypting $J_{S^c}$
                \Procedure{$\mathsf{Recover}_{S^c}$}{$\ct,k$} \Comment{Deterministically outputs $\enc(J_{S^c})$ \begin{flushright} when $k=n/2+1$. \end{flushright}}
                    \State $\ct_{\textsf{cvs},i}\gets \textsf{CVS}(\ct,i)$
                    \State $\ct_{\textsf{bool},i}\gets \textsf{is\_in\_set}_S(\ct,i)$
                    \State $\ct'\gets \sum_{i=1}^{k} \sf{Mult}(\ct_{\textsf{bool},i}, \ct_{\textsf{cvs},i})$
                    \State $\ct_{S^c}\gets \eval_{\textsf{Normalize}_{k}}(\ct')$
                    \State \textbf{return} $\ct_{S^c}$
                \EndProcedure
            \end{algorithmic}
        \end{algorithm}

        \begin{theorem}  \label{lem: attack on rep}
            If the rotation key with index 1 is given, $\mathsf{Recover}_{S^c}(\mathsf{ct},k)$ in Algorithm \ref{alg: deterministic attack on rep} can be evaluated within a computational cost of $O(k\log t)$, outputting $\enc(J_{S^c})$ with probability
            \[
            1-\frac{{n/2 \choose k}}{{n \choose k}}.
            \]
            Otherwise, it outputs an encryption of zero vector $\enc(0,\dots, 0)$.
         \end{theorem}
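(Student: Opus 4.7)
The plan is to analyze each summand of the accumulator $\ct'$ termwise using the already-established guarantees on $\textsf{CVS}$ and $\mathsf{is\_in\_set}_S$, and then invoke $\mathsf{Normalize}_k$ to collapse the aggregate to the desired form. The probability expression is then obtained from a direct hypergeometric count, and the complexity bound follows by summing the per-iteration cost.

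First I would fix an index $i\in\{1,\dots,k\}$ and split on membership in $S$. If $i\in S$, then by the specification of $\mathsf{is\_in\_set}_S$ the ciphertext $\ct_{\textsf{bool},i}$ encrypts $(0,\dots,0)$, so $\mathsf{Mult}(\ct_{\textsf{bool},i},\ct_{\textsf{cvs},i})$ decrypts to the zero vector regardless of $\ct_{\textsf{cvs},i}$. If $i\in S^c$, then the preceding theorem gives $\ct_{\textsf{cvs},i}=\enc(J_{S^c})$ and $\ct_{\textsf{bool},i}=\enc(1,\dots,1)$, so the product encrypts $J_{S^c}$. Setting $j:=|\{1,\dots,k\}\cap S^c|$, the sum $\ct'$ therefore encrypts $j\cdot J_{S^c}$, whose entries lie in $\{0,j\}\subseteq\{0,1,\dots,k\}$. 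Applying $\mathsf{Normalize}_k$, whose specification maps $0\mapsto 0$ and $\{1,\dots,k\}\mapsto 1$, yields $\enc(J_{S^c})$ whenever $j\ge 1$ and $\enc(0,\dots,0)$ when $j=0$.

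For the probability, the failure event $j=0$ coincides with $\{1,\dots,k\}\subseteq S$; since from the adversary's viewpoint $S$ is uniformly distributed over size-$n/2$ subsets of $\{1,\dots,n\}$, a standard hypergeometric count gives
\[
\Pr[j=0]\;=\;\frac{\binom{n-k}{n/2-k}}{\binom{n}{n/2}}\;=\;\frac{\binom{n/2}{k}}{\binom{n}{k}},
\]
producing the claimed success probability $1-\binom{n/2}{k}/\binom{n}{k}$. For the cost, each of the $k$ iterations invokes $\textsf{CVS}$ and $\mathsf{is\_in\_set}_S$, both of cost $O(\log t)$ by the earlier analysis, along with one ciphertext--ciphertext multiplication and one addition. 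The final $\mathsf{Normalize}_k$ step costs $O(\log k)$ multiplications, which is absorbed into $O(\log t)$ since $k\le n\le\phi(t)$. Summing over the $k$ iterations yields total complexity $O(k\log t)$.

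The argument is essentially a clean composition of the two preceding lemmas, so I do not anticipate a serious obstacle. The one point I would be careful with is to check that the specifications of $\textsf{CVS}$ and $\mathsf{is\_in\_set}_S$ deliver the \emph{exact} stated outputs in both branches; this relies on the mild genericity assumption $v_i\neq m$ and $v_i\neq v_j$ for distinct $i,j\in S$ that was already noted in the previous subsection. Even if a spurious collision enlarges $\ct_{\textsf{cvs},i}$ in the $i\in S$ branch, the vanishing $\ct_{\textsf{bool},i}$ kills the contribution, so the termwise analysis is unaffected.
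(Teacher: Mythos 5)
Your proposal is correct and follows essentially the same argument as the paper's proof: a case split on whether $i\in S$ or $i\in S^c$ showing that the accumulator encrypts $|S^c\cap\{1,\dots,k\}|\cdot J_{S^c}$, the hypergeometric count for the failure event $\{1,\dots,k\}\subseteq S$, the collapse via $\mathsf{Normalize}_k$, and the $O(k\log t)$ cost from the $k$ calls to $\textsf{CVS}$. The only difference is that you spell out the identity $\binom{n-k}{n/2-k}/\binom{n}{n/2}=\binom{n/2}{k}/\binom{n}{k}$, which the paper states without derivation.
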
 
        \begin{proof}
            Since the slots in $\ct_{\textsf{cvs},i}$ contain $n/2$ ones if $i\in S^c$, and a single one if $i\in S$, applying $\eval_{\textsf{RotSum}}$ yields either 
            $\enc\left(\frac{n}{2},\dots,\frac{n}{2}\right)$ or $\enc(1,\dots,1).$
            Consequently, by interpolation, the output $\ct_{\text{bool},i}$ is a vector of Boolean values, either $(1,\dots,1)$ or $(0,\dots,0)$. Thus, for $i=1,\dots,k$, the sum of the $\ct_{\textsf{cvs},i}$ after homomorphic multiplication with $\ct_{\text{bool},i}$ decrypts to the same value as 
            $|S^c\cap\{1,\dots,k\}|\cdot \enc(J_{S^c}).$
            Since $S\subset\{1,\dots,n\}$ is a random subset of size $n/2$, $|S^c\cap\{1,\dots,k\}|$ is zero with probability $\frac{{n/2 \choose k}}{{n \choose k}}$ and otherwise nonzero.
            Thus, after evaluating $\textsf{Normalize}_{k}$, we obtain $\enc(0,\dots,0)$ with probability $\frac{{n/2 \choose k}}{{n \choose k}}$, or $\enc(J_{S^c})$ with probability $1-\frac{{n/2 \choose k}}{{n \choose k}}$.

            Furthermore, the procedure $\mathsf{Recover}_{S^c}(\ct)$ calls $\textsf{CVS}_i$ for $i=1,\dots,k$, which incurs a cost of $O(k\log t)$ homomorphic operations. The cost of the remaining operations does not exceed $O(k\log t)$.
        \end{proof}

        By taking $k = n/2 + 1$, this attack can achieve a $100\%$ success rate\footnote{With only negligible probability of failure, due to the properties of the underlying RLWE scheme; see Definitions~\ref{def:HE_correctness} and~\ref{def:VHE_Completness}.}.         
    Even when $k < n$, the failure probability of the attack is bounded by $\frac{{n/2 \choose k}}{{n \choose k}} \leq 2^{-k}$, which decreases exponentially with $k$. Therefore, if the adversary is willing to trade off success probability for reduced circuit latency, they may choose a sufficiently small value of $k$. 
    Moreover, since the circuit ${\sf Recover}_{S^c}(\ct,k)$ is parallelizable, increasing $k$ may not lead to increased latency, depending on the adversary's computational capabilities.
    
    Even if the attack fails, the output remains $\enc(0, \dots, 0)$, and thus, by the definition of $\ct_{\sf forged}$ in~\ref{ctforged formula}, the attack remains undetected. Hence, the attack can still be carried out by a covert adversary, even when $k$ is sufficiently small.

        \begin{corollary}\label{Cor: Rep}
            The scheme $\mathsf{REP}$ is not secure against the covert adversary $\mathcal{A}_{\mathsf{cov}}^{\rtk_1}$.
        \end{corollary}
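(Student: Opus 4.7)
The plan is to exhibit a concrete covert adversary strategy based on Algorithm~\ref{alg: deterministic attack on rep} and verify that it achieves a non-negligible forgery advantage while being detected only with negligible probability. Let $g \neq f$ be any polynomial circuit chosen by the adversary such that $g(m) \neq f(m)$ for some plaintexts of interest (this is always achievable since the adversary can freely choose $g$). Pick any fixed parameter $k \geq 1$, e.g., $k = 1$, or $k$ of order $\omega(\log \lambda)$ to push the success probability close to $1$.

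The adversary $\mathcal{A}_{\sf cov}^{\rtk_1}$ proceeds as follows on a received input ciphertext $\ct$: first, it runs $\ct_{S^c} \gets \mathsf{Recover}_{S^c}(\ct, k)$ using its access to $\rtk_1$ (which enables the rotations needed inside $\mathsf{CVS}$ and $\mathsf{RotSum}$); second, it homomorphically forms $\ct_S := \enc(1,\dots,1) - \ct_{S^c}$; and finally, it outputs
\[
    \ct_{\sf forged} := \mathsf{Mult}(\ct_S, \eval_f(\ct)) + \mathsf{Mult}(\ct_{S^c}, \eval_g(\ct))
\]
as in~(\ref{ctforged formula}). I would then split the analysis into the two cases guaranteed by Theorem~\ref{lem: attack on rep}. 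In the \emph{success case}, which occurs with probability at least $1 - \binom{n/2}{k}/\binom{n}{k} \geq 1 - 2^{-k}$, the ciphertext $\ct_{S^c}$ encrypts $J_{S^c}$, so $\ct_{\sf forged}$ decrypts to a vector whose verification slots contain $f(v_i)$ and whose computation slots contain the common value $g(m) \neq f(m)$. Both verification checks of \REP{} (equality on verification slots and consistency across computation slots) are satisfied, yet the client accepts $g(m)$ as the result, contradicting soundness.

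In the \emph{failure case}, $\ct_{S^c}$ decrypts to $(0,\dots,0)$, and therefore $\ct_S$ decrypts to $(1,\dots,1)$. Substituting into the forgery formula, $\ct_{\sf forged}$ decrypts to the same plaintext as $\eval_f(\ct)$; this is precisely the honest output, so the client accepts without observing any deviation. Consequently, the total probability that the adversary is detected is bounded by the HE correctness/completeness error, which is negligible in $\lambda$, while the probability that $\ct_{\sf forged}$ is accepted with the forged value $g(m)$ in the computation slots is at least $1 - 2^{-k}$, which is noticeable (and can be made overwhelming) by choice of $k$. This violates the $1/\mathsf{poly}(\lambda)$-covert security requirement.

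The routine step is checking the two decryptions above, and I do not anticipate any real obstacle; the only subtlety worth handling carefully is the implicit assumption that $g(m) \neq f(m)$ on some message space for which the adversary wins. This is trivially arranged since the adversary may choose $g$ after seeing the public specification of $f$, for instance by taking $g = f + 1$ over $\mathbb{Z}_t$, so that a forged result differs from the honest one on every input.
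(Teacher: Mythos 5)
Your proposal is correct and follows essentially the same route as the paper: it invokes Theorem~\ref{lem: attack on rep}, plugs $\ct_{S^c}$ into the forgery formula~(\ref{ctforged formula}), and uses the key observation that a failed recovery yields $\enc(0,\dots,0)$, so the forged ciphertext coincides with the honest output and the deviation goes undetected. The paper's own justification is exactly this two-case argument, stated more tersely in the paragraph preceding the corollary.
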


        The following result follows directly from Corollaries~\ref{cor_covert} and~\ref{Cor: Rep}.

        \begin{corollary}
            If BFV bootstrapping requires the key $\rtk_1$, then $\mathsf{REP}$ is either not bootstrappable or insecure against covert/malicious adversaries.
        \end{corollary}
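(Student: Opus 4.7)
The plan is to observe that this is a direct case analysis using the two previously established corollaries, so the proof is essentially a one-line logical deduction rather than any new technical content. I would split on whether $\mathsf{REP}$ is bootstrappable.

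First, if $\mathsf{REP}$ is not bootstrappable, then the first disjunct of the conclusion holds trivially and there is nothing more to prove. Otherwise, $\mathsf{REP}$ must be bootstrappable, meaning the server is supplied with all keys needed to perform BFV bootstrapping. By hypothesis, BFV bootstrapping requires $\rtk_1$, so the evaluation key set $\evk$ available to the adversary contains $\rtk_1$. In other words, any PPT adversary against the scheme may be viewed as an $\rtk_1$-adversary.

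At this point I would invoke Corollary~\ref{cor_covert} and Corollary~\ref{Cor: Rep}, which state respectively that $\mathsf{REP}$ is not sound against $\mathcal{A}_{\sf mal}^{\rtk_1}$ and not sound against $\mathcal{A}_{\sf cov}^{\rtk_1}$. Either corollary is sufficient to exhibit a successful forgery strategy (concretely, running $\mathsf{Recover}_{S^c}$ from Algorithm~\ref{alg: deterministic attack on rep} and substituting into the forgery template~\eqref{ctforged formula}), so the second disjunct of the conclusion holds in this case.

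The only subtlety — and really the only place where something could go wrong — is the implicit assumption that the adversary is able to access $\rtk_1$ whenever the honest evaluator does. I would make this explicit by noting that in our threat model (Section~\ref{subsec:2.2}) the adversary is given $\evk$, and bootstrapping is an operation the server is expected to perform honestly during evaluation, so $\rtk_1 \in \evk$ whenever bootstrapping is supported. No separate obstacle arises; the statement is a compact packaging of the two preceding corollaries.
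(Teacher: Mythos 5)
Your proposal is correct and matches the paper's argument exactly: the paper states that this corollary ``follows directly from Corollaries~\ref{cor_covert} and~\ref{Cor: Rep},'' which is precisely the case split you carry out (not bootstrappable, or else $\rtk_1\in\evk$ is available to the adversary and the two preceding corollaries apply). Your explicit note that the adversary's access to $\rtk_1$ is justified by the threat model of Section~\ref{subsec:2.2} is a reasonable clarification of the step the paper leaves implicit.
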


    \subsection{Attack on Replication Encoding with Multiple Secret Keys}\label{subsec: diff keys}

        \subsubsection{Patch on \REP{} using Multiple Secret Keys.}
        We introduce $\REP^{\sf msk}$, a countermeasure to the previous deterministic attack. The previous attack evaluates homomorphic rotations to compare values in different slots and recover $\enc(J_{S^c})$. 
        Notably, a rotation key with index 1 is essential for executing this attack. Meanwhile, in $\REP$, a single value $m$ is encoded into $n$ slots. Therefore, to perform operations on two values $m_1$ and $m_2$ encoded via $\REP$, only rotation keys whose index is a multiple of $n$ will be required. Nevertheless, we note that a rotation key with index 1 remains essential for the bootstrapping of homomorphic encryption. Thus, the idea of preventing the above attack by not providing a rotation key with index 1 is not adequate. 
        
        Instead of restricting the rotation key with index 1, we consider the following possibility: Replicate the messages into $n$ different ciphertexts encrypted with \emph{different keys}. This would prevent interoperability between ciphertexts, making it difficult for the adversary to homomorphically determine in which slots the messages belong to.
        First, prepare a random subset $S \subset \{1, \dots, n\}$ with $|S| = n/2$, a message $m \in \bb Z_t$, and verification values $v_i \gets \bb Z_t$ for $i \in S$. 
        Next, generate $n$ different keys $\{\pk_i, \sk_i, \mathsf{evk}_i\}$ for $i \in \{1, \dots, n\}$.
        Now, instead of encrypting messages as a vector, we encrypt them elementwisely, i.e. $\ct_i = \enc_{\pk_i}(m)$ for $i \in S$ and $\ct_i = \enc_{\pk_i}(v_i)$ for $i \notin S$.
        Then, the server can compute those ciphertexts with different $\evk_i$'s.
        This patch achieves the same effect as the original \REP{} while preventing the previous attack in Algorithm~\ref{alg: deterministic attack on rep} by restricting the homomorphic comparison, while still providing the bootstrapping functionality. 
        
            \begin{figure}
                \centering
                \begin{tikzpicture}
                    \def\squareSize{0.7cm}
                    \def\adjust{0cm}
                    \node[draw, rectangle, minimum size=\squareSize] (a1) at (10.2*\squareSize-\adjust, 2) {$m$};
                    \node at (10.2*\squareSize-\adjust, 1.35) {$\pk_1$};
                    \node[draw, rectangle, minimum size=\squareSize] (a2) at (11.4*\squareSize-\adjust, 2) {$m$};
                    \node at (11.4*\squareSize-\adjust, 1.35) {$\pk_2$};
                    \node[draw, rectangle, minimum size=\squareSize] (a3) at (12.6*\squareSize-\adjust, 2) {$v_3$};
                    \node at (12.6*\squareSize-\adjust, 1.35) {$\pk_3$};
                    \node[draw, rectangle, minimum size=\squareSize] (a4) at (13.8*\squareSize-\adjust, 2) {$v_4$}; 
                    \node at (13.8*\squareSize-\adjust, 1.35) {$\pk_4$};
                    
                    \node[draw, rectangle, minimum size=\squareSize] (a3) at (3*\squareSize, 2) {$m$};
                    \node[draw, rectangle, minimum size=\squareSize] (a4) at (4*\squareSize, 2) {$m$}; 
                    \node[draw, rectangle, minimum size=\squareSize] (a3) at (5*\squareSize, 2) {$v_3$};   
                    \node[draw, rectangle, minimum size=\squareSize] (aw) at (6*\squareSize, 2) {$v_4$};  
        
                    \node at (4.6*\squareSize-\adjust, 1.35) {$\pk$};
                    
                    \draw[-stealth] (4.8, 2) -- (6.5, 2);
                \end{tikzpicture}
                \vspace{-0.5cm}
            \end{figure}
        
        However, we can also attack $\REP^{\sf msk}$ similarly to our attack on vADDG, namely, by implementing a pseudorandom characteristic function. It is possible since the evaluation does not rely on homomorphic rotation or comparison but only on slot-wise operations. 
        We analyze how this attack applies to this variant. Moreover, in vADDG, it was able to patch the attack by increasing the number of verification values by adjusting some parameters. We intend to investigate whether the same approach can be applied to patch $\REP^{\sf msk}$ or not.

        \subsubsection{Attack on $\REP^{\sf msk}$.} \label{subsec:3.3 Extended Attack}
        First, let the adversary fix a random subset $A\subset \bb Z_t$ where the message $m$ is expected to belong. 
        Next, evaluate $\chi_A$ homomorphically on a ciphertext.
        After evaluating $\chi_A$, it becomes possible to forge the values that exactly belong to $A$.
        If $\{m,v_i:i\in S\}\cap A = \{m\}$, then ${\sf CMult}$ between $\ct_S:=\eval_{\chi_A}(\ct)$ and the encoding of the vector $(\overbrace{1\dots1}^{n}\Vert\overbrace{0\dots0}^{N-n})$ is an encryption of a vector $(J_{S^c}\,\Vert\,\overbrace{0\dots0}^{N-n})$.
        
            \begin{algorithm}
                    \caption{Extended Attack on \REP}
                    \label{alg: probabilistic attack on rep}
                    \begin{algorithmic}[1]
                    \State \textbf{Input:} A ciphertext $\ct$ encrypted via $\REP$ and a subset $A\subset \bb Z_t$.
                    \State \textbf{Output:} A ciphertext possibly encrypting $J_{S^c}$ in the first $n$ slots.           
                        \Procedure{$\mathsf{Extended\_Attack}$}{$\ct, A$}
                            \State $\ct_{\textsf{bool}}\gets \textsf{Eval}_{\chi_A}(\ct)$
                            \State $\mathsf{pt}_{\textsf{Masking}}\gets (\overbrace{1,\dots,1}^{n},\overbrace{0,\dots,0}^{N-n})$
                            \State $\ct_{S^c}\gets\textsf{CMult}(\ct_{\sf bool},\mathsf{pt}_{\sf Masking})$ 
                            \State \textbf{return} $\ct_{S^c}$
                        \EndProcedure
                    \end{algorithmic}
            \end{algorithm}

        Let us calculate the attack success probability with respect to the size of $A$. 
        Since $A$ is chosen randomly, the success probability of the attack depends only on the size of $A$. 
        Let $p=|A|/{t}$ be the probability of each element $m$ or $v_i$ belonging to the set $A$. 
        Set $q := 1 - p$ and $\mu:=\left|\{m,v_i:i\in S\}\right|=1+\frac{n}{2}$. 

            \begin{theorem}\label{thm2: univ}
                Let $A$ be a randomly chosen subset of $\bb Z_t$ with size $|A|=p\cdot t$. For $\ct_{S^c}\gets\mathsf{Extended\_Attack}(\ct,A)$ in Algorithm \ref{alg: probabilistic attack on rep}, the probability $$\Pr\left[\mathsf{Dec}(\ct_{S^c})=(J_{S^c}\Vert\overbrace{0\dots0}^{N-n})\right]=:Q(p)$$ is maximized when $p=\mu^{-1}=\frac{1}{1+(n/2)}$. Also $Q(p)>e^{-1}p$ when restricted to $p=\mu^{-1}.$
            \end{theorem}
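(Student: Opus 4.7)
The plan is to express $Q(p)$ in closed form, maximize it by elementary single-variable calculus, and then reduce the claimed lower bound at the maximizer to the standard inequality $(1-1/\mu)^{\mu-1} > 1/e$.

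First I would analyze the success event of $\mathsf{Extended\_Attack}$. By construction, the output decrypts to $(J_{S^c}\,\Vert\,0\cdots 0)$ precisely when $\chi_A(m) = 1$ while $\chi_A(v_i) = 0$ for every $i \in S$, that is, when $\{m\}\cup\{v_i:i\in S\}$ meets $A$ in exactly the singleton $\{m\}$. By hypothesis this set has $\mu = 1 + n/2$ distinct elements, and modelling $A$ as a subset of $\bb Z_t$ in which each element is independently included with probability $p$, the $\mu$ membership indicators become independent Bernoulli$(p)$ variables. Hence
\[
Q(p) \;=\; p \cdot (1-p)^{\mu-1} \;=\; p \cdot (1-p)^{n/2}.
\]

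Next I would maximize $Q$ on $[0,1]$ by differentiation:
\[
Q'(p) \;=\; (1-p)^{\mu-2}\bigl(1 - \mu p\bigr),
\]
whose sign changes from positive to negative at $p = 1/\mu$, giving the unique interior maximizer $p = \mu^{-1}$. Substituting back, $Q(\mu^{-1}) = \mu^{-1}(1-1/\mu)^{\mu-1}$, so the desired bound $Q(\mu^{-1}) > e^{-1}\mu^{-1}$ is equivalent to $(1-1/\mu)^{\mu-1} > e^{-1}$ for $\mu \geq 2$. I would close by invoking the well-known inequality $(1+1/k)^k < e$ for all positive integers $k$; taking reciprocals with $k = \mu - 1$ yields $(1-1/\mu)^{\mu-1} = \bigl(1+1/(\mu-1)\bigr)^{-(\mu-1)} > e^{-1}$, as required.

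No step poses a genuine obstacle. The only place meriting minor care is the very first one, where I implicitly replace the fixed-size sample $|A| = pt$ with the exchangeable product model so that the $\mu$ inclusion indicators are exactly independent; the fixed-size variant differs by $O(\mu^2/t)$, which is negligible in the cryptographic regime and inconsequential for the stated bound. The rest is pure calculus together with a textbook inequality about $e$.
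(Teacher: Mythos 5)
Your proposal is correct and follows essentially the same route as the paper: the same success-event analysis (the set $\{m\}\cup\{v_i:i\in S\}$ of $\mu$ elements must meet $A$ exactly in $\{m\}$) yields $Q(p)=p(1-p)^{\mu-1}$, which is algebraically identical to the paper's $(1-p)^{\mu-1}-(1-p)^{\mu}$, and the same single-variable calculus locates the maximizer at $p=\mu^{-1}$. The only difference is the final bound, where you invoke the standard inequality $(1+1/k)^k<e$ with $k=\mu-1$ instead of the paper's limit-and-convexity argument on $R(p)=(1-p)^{(1/p)-1}-(1-p)^{1/p}$; your version is slightly cleaner, and your explicit remark about replacing the fixed-size sample $|A|=pt$ by the independent-inclusion model is a point of care the paper leaves implicit.
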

            \begin{proof}
                Let's define $\cc V:=\{m,v_i:i\in S\}$. Then, there are three possible cases: 
                    \begin{itemize} 
                        \item Case (1): $\cc V\cap A=\emptyset$. The probability is $q^{\mu}$.
                        \item Case (2): $\{v_i:i\in S\}\cap A\neq\emptyset$. The probability is $1-q^{\mu-1}$. 
                        \item Case (3): $\cc V \cap A=\{m\}$. The probability is $q^{\mu-1}-q^{\mu}$. 
                    \end{itemize}
                Nothing happens in the case (1), and the adversary is caught in the case (2). The case (3) is the case of a successful attack. 
                Now represent the probability of case (3) $Q(p)$ in terms of $p$: $Q(p)=q^{\mu-1}-q^\mu  =(1-p)^{\mu-1}-(1-p)^{\mu}.$
                
            By simple calculus, one can verify that $\frac{dQ}{dp}(\mu^{-1})=0$ and $Q(p)$ is maximized when $p=\mu^{-1}$: $Q(\mu^{-1})=(1-\mu^{-1})^{\mu-1}-(1-\mu^{-1})^{\mu}=(1-p)^{(1/p)-1}-(1-p)^{(1/p)}.$
            
            Now let $R(p):=(1-p)^{(1/p)-1}-(1-p)^{(1/p)}$. Then, $\lim_{p\to 0+}\frac{dR}{dp}(p)=e^{-1}$ and $\frac{d^2R}{dp^2}>0$. Thus 
            $R(p)>e^{-1}p$.
            \end{proof}
     
        Therefore, for any given parameter $\mu=1+(n/2)$, if the adversary randomly chooses a subset $A$ so that $p=|A|/t$ is approximately $\mu^{-1}$, and homomorphically evaluates $\chi_A$, then the adversary's advantage is maximized with a probability at least $({e \mu})^{-1}$.
        
        Thus, unlike in the vADDG attack where a countermeasure was possible by adjusting parameters, in this variant of \REP{} a forgery attack can be mounted with non-negligible probability. However, under the assumption of a covert adversary model, such an attack is not feasible. 

        \paragraph{Pseudorandom Characteristic Function over $\bb Z_t$.}
        The attack cost in Theorem~\ref{thm2: univ} is determined by the evaluation cost of the characteristic function. In the attack on vADDG, the evaluation was straightforward since extracting encrypted bits from the TFHE ciphertext string was easy. However, homomorphically evaluating \(\chi_A\) for any randomly given subset \(A\) with the full support \(\mathbb{Z}_t\) generally requires \(O(\sqrt{t})\) homomorphic operations, which results in an exponential cost with respect to \(\log t\). Therefore, it is necessary to devise an efficient (polynomial time) method for evaluating a pseudorandom characteristic function over \(\mathbb{Z}_t\).
       
        Before constructing a pseudorandom characteristic function, note that the pseudorandom characteristic function used in this attack does not need to be cryptographically secure; rather, it is sufficient for it to be a function that heuristically produces an unbiased uniform distribution.
        
        To construct an efficient pseudorandom characteristic function, we assume the pseudorandomness of the distribution of the roots of unity in $\bb Z_t$:
        For an integer $t$ and $d \mid \phi(t)$, let $U_{t,d}:=\{x\in \bb Z_t^\times:x^{\phi(t)/d}=1\} \subset \bb Z_t^\times$. We assume that this subset $U_{t,d}$ of size $\phi(t)/d$ is an unbiased sample from a uniform distribution of $(\phi(t)/d)$-combinations from $\bb Z_t^\times$. Specifically, for a large odd prime $p$, $U_{p,2}$ is the set of quadratic residues modulo $p$, whose pseudorandomness is utilized to construct the Legendre PRF~\cite{LPRF}. 
        Although the specific assumptions and parameters in here are different from those of Legendre PRF, we again note that it is sufficient for it to be a function that heuristically produces an unbiased uniform sampling.
        
        Now we state the property of the characteristic function $\chi_{U_{t,d}}: \bb Z_t\to\{0,1\}$ for $t=p$.
        For $t=p^r, r>1$, refer Appendix~\ref{Appendix A}. 
        
        \begin{lemma}\label{bfvprf} For $t=p$ and $d \mid p-1=\phi(t)$, we have
            $$d\cdot \chi_{U_{t,d}}(x)=\sum_{i=0}^{d-1} \left(x^{\phi(t)/d}\right)^i.$$
        \end{lemma}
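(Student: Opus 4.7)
The plan is a direct computation. The key idea is to substitute $y := x^{\phi(t)/d}$ and then invoke Fermat's little theorem together with the geometric series formula.

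First I would rewrite the right-hand side as $\sum_{i=0}^{d-1} y^i$ where $y = x^{\phi(t)/d} \in \bb Z_t$. By the defining condition of $U_{t,d}$, an element $x \in \bb Z_t^\times$ lies in $U_{t,d}$ if and only if $y = 1$, so the claim splits into two cases. In the case $y = 1$, the sum is trivially equal to $d$, matching $d \cdot \chi_{U_{t,d}}(x) = d \cdot 1$. In the case $y \neq 1$ (i.e.\ $x \in \bb Z_t^\times \setminus U_{t,d}$), I would apply Fermat's little theorem in $\bb Z_p^\times$ to obtain $y^d = x^{\phi(t)} = 1$, so $y$ is a $d$-th root of unity other than $1$. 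Since $t = p$ is prime and $y \neq 1$, the element $y-1$ is invertible in $\bb Z_t$, and the standard geometric series identity yields
$$\sum_{i=0}^{d-1} y^i \;=\; \frac{y^d - 1}{y - 1} \;=\; 0,$$
which matches $d \cdot \chi_{U_{t,d}}(x) = 0$.

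There is no substantive obstacle beyond the boundary case $x = 0$: since $\phi(t)/d \geq 1$ we have $y = 0$, and the sum nominally reads $0^0 + 0 + \cdots + 0$. This is resolved either by restricting the identity to $\bb Z_t^\times$ (which is the relevant domain for the attack on $\REP^{\sf msk}$, since $0 \notin U_{t,d}$ is known a priori) or by adopting the convention $0^0 = 0$. In either reading the only nontrivial arithmetic fact required is the invertibility of $y - 1$ in $\bb Z_p$, which is immediate from the primality of $t$.
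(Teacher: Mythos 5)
Your proof is correct and follows essentially the same route as the paper's: a case split on $x\in U_{t,d}$ versus $x\in\bb Z_t^\times\setminus U_{t,d}$, using $y^d=x^{\phi(t)}=1$ and the vanishing of the geometric sum when $y\neq 1$ (the paper phrases this via the factorization $(y-1)\sum_i y^i=y^d-1=0$ in the integral domain $\bb Z_p$, which is equivalent to your division by the invertible $y-1$). Your remark on the $x=0$ boundary case is in fact slightly more careful than the paper, which simply asserts both sides vanish there without addressing the $0^0$ constant term.
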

        \begin{proof}
        \begin{enumerate}
            \item First, suppose that $x=0$. Then the left-hand side $(\sf LHS)$ and the right-hand side $(\sf RHS)$ is $0$.
            \item Next, suppose that $x\in U_{t,d}$. Since $x^{\phi(t)/d}=1$ by definition of $U_{t,d}$, both $\sf LHS$ and $\sf RHS$ are equal to $d$.
            \item Lastly, suppose that $x\in \bb Z_t^\times\setminus U_{t,d}$. Then, $x^{\phi(t)/d}\neq 1$ by definition of $U_{t,d}$ Also, $(x^{\phi(t)/d})^d-1=(x^{\phi(t)/d}-1)\left(\sum_{i=0}^{d-1} \left(x^{\phi(t)/d}\right)^i\right)=0$.
            Thus, $\sum_{i=0}^{d-1} \left(x^{\phi(t)/d}\right)^i=0$ since $\bb Z_t$ is an integral domain. 
        \end{enumerate} 
        \end{proof}
        Note that this function can be homomorphically evaluated using $O(\log t)$ homomorphic multiplications with a multiplication depth of $\lceil\log\phi(t)\rceil$.
        
         By homomorphically evaluating $\chi_{U_{t,d}}(x+a)$ for a random integer $a$ and an appropriate divisor $d \mid (p-1)$ with $d \approx \mu$, the adversary can construct a characteristic function $\chi_A$ whose support has size $|A| = \phi(t)/d \approx \lfloor t / \mu \rceil$, thereby achieving the maximal attack probability in Theorem~\ref{thm2: univ}.
        If necessary, the adversary may evaluate several characteristic functions to construct another size of characteristic function.\footnote{For example, one may utilize the formula $\chi_A(x) \cdot \chi_B(x) = \chi_{A \cap B}(x).$}

        \begin{corollary}
            $\sf REP^{\sf msk}$ is not secure against the presence of $\cc A_{\sf mal}.$
        \end{corollary}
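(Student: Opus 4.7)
The plan is to assemble the two technical ingredients already established, namely the pseudorandom characteristic function from Lemma~\ref{bfvprf} and the success-probability analysis of the Extended Attack in Theorem~\ref{thm2: univ}, and verify that together they exhibit a malicious PPT adversary whose forgery is accepted with non-negligible probability. Since $\REP^{\sf msk}$ differs from $\REP$ only in that the $n$ positions are encrypted under independent keys $\{\pk_i,\sk_i,\evk_i\}$, no cross-slot operations are needed: the whole attack will be carried out slotwise, so the multi-key defense is immaterial.

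First I would describe the adversary $\cc A_{\sf mal}$. Let $\mu:=1+n/2$ and pick any divisor $d\mid\phi(t)$ with $d\approx\mu$ (for $t=p$ prime such a $d$ can always be found up to a constant factor, and the $t=p^r$ case is handled by the analogue in Appendix~\ref{Appendix A}). Choose a uniformly random shift $a\gets\bb Z_t$ and set $A:=U_{t,d}-a$, so that $\chi_A(x)=\chi_{U_{t,d}}(x+a)$. By Lemma~\ref{bfvprf}, $\chi_A$ is a polynomial of degree $\phi(t)$ in $x$, evaluable homomorphically in $O(\log t)$ multiplications using repeated squaring. The adversary applies Algorithm~\ref{alg: probabilistic attack on rep} to each of the $n$ received ciphertexts $\ct_i$ using their respective $\evk_i$, obtaining an encryption $\ct_{S^c,i}$ of a bit that indicates, under the pseudorandomness heuristic on $U_{t,d}$, whether the $i$-th underlying plaintext lies in $A$. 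The adversary then outputs the forged ciphertext tuple analogous to~\eqref{ctforged formula}: in each slot $i$, return $\mathsf{Mult}(\enc_{\pk_i}(1)-\ct_{S^c,i},\eval_f(\ct_i)) + \mathsf{Mult}(\ct_{S^c,i},\eval_g(\ct_i))$ for a malicious circuit $g\neq f$.

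Second I would invoke Theorem~\ref{thm2: univ} to bound the success probability. With $p=|A|/t=\phi(t)/(dt)\approx\mu^{-1}$, the probability that $A\cap\{m,v_i:i\in S\}=\{m\}$ is at least $(e\mu)^{-1}=\bigl(e(1+n/2)\bigr)^{-1}$, which is non-negligible in the security parameter since $n$ is polynomial. On this event the forged tuple contains $g(m)$ in the computation slots and the honest $f(v_i)$ in the verification slots, so it passes the verification procedure of $\REP^{\sf msk}$ while $\mathsf{Dec}(\ct_{\sf forged})\neq f(m)$. By Definition~\ref{def:VHE_Soundness_Malicious} this suffices to conclude that $\REP^{\sf msk}$ is not sound against $\cc A_{\sf mal}$.

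The main subtlety, rather than an obstacle, is the reliance on the heuristic that $U_{t,d}$ (and hence its random shift) behaves like an unbiased random $(\phi(t)/d)$-subset of $\bb Z_t^\times$, which is precisely the assumption already made above Lemma~\ref{bfvprf} and is analogous to the Legendre-PRF hypothesis. A minor point to verify is that the degenerate probability contributions—$0\in A$ and collisions among the $v_i$—are negligible compared to $(e\mu)^{-1}$, which follows from $\mu/t$ being negligible for the parameter regimes considered, so they do not affect the asymptotic claim.
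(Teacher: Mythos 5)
Your proposal is correct and follows essentially the same route as the paper: the corollary is obtained by combining the efficient pseudorandom characteristic function of Lemma~\ref{bfvprf} (instantiated as $\chi_{U_{t,d}}(x+a)$ with $d\approx\mu$) with the slotwise Extended Attack and the success bound $(e\mu)^{-1}$ from Theorem~\ref{thm2: univ}, then noting this is non-negligible for polynomial $n$. Your additional remarks on the forged-ciphertext assembly under independent keys and on the negligible degenerate events are consistent with, and slightly more explicit than, the paper's treatment.
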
 
        Whether there exists a method to successfully execute a forgery attack with overwhelming probability remains an open question.
         

\section{Cryptanalysis on Polynomial Encoding} \label{sec:PE}

    \subsection{Polynomial Encoding}
    There is another suggested scheme for verifiable HE in \cite{VE}, the Polynomial Encoding (\PE).
    Its plaintext and ciphertext spaces are inherited from the original homomorphic encryption scheme, augmented with a new indeterminate $Y$.
    For example, in the BFV scheme, its encryption is 
        \[
            \bb{Z}^N_t[Y] \overset{\enc_\PE}{\longrightarrow} \mathcal{R}^2_q[Y],
        \]
    where the $\enc_\PE$ is the coefficient-wise encryption.
    To avoid confusion, we represent the elements of $\mathcal{R}^2_q[Y]$ using capitalized sans-serif font: $\mathsf{F}(Y) = \ct_0 + \ct_1 Y + \dots + \ct_d Y^d$. 
    We refer to $\mathsf{F}(Y)$ as a ciphertext polynomial and each of its coefficients as a ciphertext.

    Encryption in \PE{} proceeds as follows: 
    For the number of slots $N$, a random verification value $\mathbf{v}=(v_1,\dots, v_N)\in \bb Z_t^N$ and a secret value $\alpha \in \mathbb{Z}_t^\times$ are chosen. 
    Then, the message $\mathbf{m}=(m_1,\dots,m_N)\in \bb Z_t^N$ and the verification value $\mathbf{v}$ are interpolated at $Y = 0$ and $Y = \alpha$, respectively. 
    After that, each message is encrypted coefficientwise:
    
        \begin{center}
            \begin{tikzpicture}
                  \matrix (m) [matrix of math nodes,row sep=1em,column sep=3em,minimum width=2em]
                        {
                            \bb Z^N_t & \bb Z^N_t[Y] & \cc R^2_q[Y]\\
                            \mathbf{m}& \mathbf{m}+\left[\frac{\mathbf{v-m}}{\alpha}\right]_tY&\ct_0+\ct_1Y\\
                        };
                  \path[]
                        (m-1-1) edge [-stealth] (m-1-2)
                        (m-1-2) edge [-stealth] (m-1-3)
                        (m-1-1) edge [-stealth, out=15, in=165] node[above]{$\enc_\PE$} (m-1-3)
                        
                        (m-2-1) edge [|-stealth, out=-15, in=195] node[below] {$\enc_\PE$}(m-2-3)
                        (m-2-2) edge [|-stealth] (m-2-3)
                        (m-2-1) edge [|-stealth] (m-2-2);
            \end{tikzpicture}
        \end{center}
        
    The server coefficientwise performs homomorphic operations like addition, constant multiplication, rotation, and even bootstrapping, except for multiplication. 
    The multiplication is evaluated as a polynomial in $Y$:
    For example, the multiplication between two ciphertext polynomials $\ct_0+\ct_1Y$ and $\ct_2+\ct_3Y$ is 
        \[
            \mathsf{Mult}(\ct_0,\ct_2)+\mathsf{Mult}(\ct_1,\ct_2)Y+\mathsf{Mult}(\ct_0,\ct_3)Y+\mathsf{Mult}(\ct_1,\ct_3)Y^2.
        \]
    If the server has performed the operations correctly, the result $\mathsf F(Y)\in \cc R^2_q[Y]$ satisfies $\dec(\mathsf F(0)) = f(\mathbf{m})$ and $\dec(\mathsf F(\alpha)) = f(\mathbf{v})$. 
    In the verification step, the client checks whether $\dec(\mathsf F(\alpha))=f(\mathbf{v})$ and if it is correct, accepts $\dec(\mathsf F(0))$ as $f(\mathbf{m})$.
    
        \subsubsection{Re-Quadratization Protocol.} 
        As the computation progresses, the degree of $Y$ increases exponentially, leading to significant computational overhead. To mitigate this, \cite{VE} proposed a client-assisted computing protocol called the Re-Quadratization (\REQ) protocol.
        This protocol makes a quartic polynomial $\mathsf{Q}_4(Y)\in \cc R^2_q[Y]$ into a quadratic polynomial $\mathsf{Q}_2(Y)\in \cc R^2_q[Y]$ while ensuring that $\dec(\mathsf{Q}_4(0))=\dec(\mathsf{Q}_2(0))$.
    
        However, if $\dec(\mathsf{Q}_4(\alpha)) = \mathbf{a}$, 
        then $\dec(\mathsf{Q}_2(\alpha)) = \mathbf{a} + \mathbf{r}$ 
        for some uniform random blinding vector $\mathbf{r} \in \mathbb{Z}_t^N$.
        Because of this random blinding vector $\mathbf{r}$, the client must compute and handle the deviations of the circuit introduced by $\mathbf{r}$.
        For instance, when performing squaring 
        \(
        (\mathbf{a} + \mathbf{r})^2 
        = \mathbf{a}^2 + 2\mathbf{a}\mathbf{r} + \mathbf{r}^2,
        \)
        the user must compute the deviation $2\mathbf{a}\mathbf{r} + \mathbf{r}^2$ and subtract it to recover $\mathbf{a}^2$ in the \REQ{} protocol. This self-correctness property through client-side computation forces the client to perform as much computation as if the client itself were running the delegated computation in plaintext, negating much of the intended benefit of HE.
        
        Despite this drawback, the \REQ{} protocol reduces the computational overhead of homomorphic computation via \PE.
        If the server multiplies two quadratic polynomials $\mathsf{Q_2}(Y)$, $\mathsf{Q}'_2(Y)$ and gets a quartic polynomial $\mathsf{Q_4}(Y)$, then the server can use \REQ{} to reduce $\mathsf{Q}_4(Y)$ back to a quadratic polynomial $\mathsf{Q}''_2(Y)$. Therefore the server can evaluate a deep circuit avoiding an exponential overhead.
        See V.D. and Appendix E in \cite{VE} for details.
        Here, we only use \REQ{} protocol as an oracle for avoiding overhead in deep circuit evaluation.
      
    \subsection{Attack on Polynomial Encoding}
    Now we present the attack on the Polynomial Encoding. We assume that the \REQ{} protocol serves as a subroutine in the computation. 
    
    Let $f(x_1,\dots, x_k)$ be an honest polynomial circuit that operates on the ciphertext polynomial space $\mathcal{R}^2_q[Y]$.
    Given $k$ inputs $\mathsf{L}_i(Y) = \ct^{(0)}_i + \ct^{(1)}_i Y$ for $i=1,\dots, k$, the polynomial circuit $f$ processes these inputs and produces the output
    \(
        f(\mathsf{L}_1(Y), \dots, \mathsf{L}_k(Y)) = \mathsf{F}(Y) \in \mathcal{R}^2_q[Y].
    \)
    When evaluated at $Y = \alpha$, we obtain
    \[
        \mathsf{F}(\alpha) = \eval_f(\ct_1^{(0)}+\alpha\cdot\ct_1^{(1)}, \dots, \ct_k^{(0)}+\alpha\cdot\ct_k^{(1)}) = \enc(f(v_1, \dots, v_k)).
    \]
    Now let $g$ be any malicious polynomial circuit that outputs $\mathsf{G}(Y)$. Similarly, when evaluated at $Y=0$, it has
    \[
        \mathsf{G}(0) = \eval_g(\ct_0^{(0)}, \dots, \ct_k^{(0)}) = \enc(g(m_1, \dots, m_k)).
    \]
    Since only the result $\sf G(0)$ will be needed, the circuit $g$ does not need to act on the entire ciphertext polynomial; it only needs to operate on its constant term, namely $\ct_0^{(0)}, \dots, \ct_k^{(0)}$. Consequently, the evaluation of $g$ does not require \REQ{}.
    To forge the result, we introduce a trick: define $\mathsf{L}^\star(Y):= \enc(u_1,\dots,u_N) Y$ where all $u_i\in \bb Z_t^\times$. For the polynomial circuit $p(y) := y^{\phi(t)}$, evaluating $p$ at $\mathsf{L}^\star(Y)$ yields $\mathsf P(Y)$ which satisfies
    \[
        \mathsf{P}(0) = p(\mathsf{L}^\star(0)) = \enc(0,\dots,0), \quad \mathsf{P}(\alpha) = p(\mathsf{L}^\star(\alpha)) = \enc(1,\dots,1). 
    \]
    Thus, if we evaluate the following polynomial circuit
    \[
        h(y_1, \dots, y_k, \ell^\star) := g(\tilde y_1, \dots, \tilde y_k) + (f(y_1, \dots, y_k) - g(\tilde y_1, \dots, \tilde y_k)) p(\ell^\star)
    \]
    at $\left(y_1,\dots,y_k,\ell^\star\right)\gets\left(\mathsf{L_1}(Y),\dots,\mathsf L_{k}(Y),\mathsf{L}^\star(Y)\right)$ where $\tilde y_i=\mathsf{L}_i(0)$ for $y_i=\mathsf{L}_i(Y)$,
    the output $\mathsf{H}(Y)$ satisfies
    \[
        \dec(\mathsf{H}(0)) = \dec(\mathsf{G}(0)), \quad \dec(\mathsf{H}(\alpha)) = \dec(\mathsf{F}(\alpha))
    \]
    which successfully passes the verification and gives the forged results.

        \begin{algorithm}
            \caption{Attack against \PE}
            \label{alg: Attack against PE}
            \begin{algorithmic}[1]
            \State \textbf{Input:} $\{\mathsf L_i(Y)\}_{1\leq i\leq k}$ are input points and  $\mathsf F(Y)$ is a honest result.
            \State \textbf{Output:} ${\sf H}(Y)$, a forged ciphertext polynomial.
                \Procedure{$\mathsf{PE\_Attack}$}{$\{\mathsf L_i(0)\}, \mathsf F(Y)$} 
                    \State $\ct_{\sf forged}\gets \eval_g(\mathsf L_1(0),\dots, \mathsf L_k(0))$ \Comment{$g$ is any circuit different from $f$.}
                    \State $\mathsf L^\star(Y)\gets\enc(u_1,\dots,u_N)Y$ \Comment{Every $u_i$ is a unit.}
                    \State $\mathsf P(Y)\gets p({\sf L}^\star (Y))$ \Comment{$p(y)=y^{\phi(t)}$.}
                    \State $\mathsf H(Y)\gets \ct_{\sf forged}+(\mathsf{F}(Y)-\ct_{\sf forged})\cdot \mathsf{P}(Y)$
                    \State \textbf{return} $\mathsf H(Y)$
                \EndProcedure
            \end{algorithmic}
        \end{algorithm}
    
        \begin{theorem}
            If the adversary can access the $\REQ$ protocol, Algorithm \ref{alg: Attack against PE} can be executed within the cost of $O(\log t)$ with deterministic output ${\sf H}(Y)$.
        \end{theorem}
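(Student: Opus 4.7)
The plan is to verify the complexity bound and the correctness of Algorithm~\ref{alg: Attack against PE} separately, identifying $p(y) = y^{\phi(t)}$ as the core technical ingredient and exploiting the algebraic identity $x^{\phi(t)} \equiv 1 \pmod{t}$ that holds for every unit $x \in \bb{Z}_t^\times$.

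For the complexity, the critical observation is that $\mathsf{L}^\star(Y) = \ct^\star Y$ is a pure monomial in $Y$ whose coefficient is $\ct^\star := \enc(u_1, \dots, u_N)$. Consequently, every intermediate power $\mathsf{L}^\star(Y)^{2^j} = (\ct^\star)^{2^j} Y^{2^j}$ remains a monomial, so the square-and-multiply exponentiation $(\ct^\star)^{\phi(t)}$ can be carried out entirely at the coefficient level in $\lceil \log_2 \phi(t) \rceil = O(\log t)$ ciphertext multiplications, each handled by \relin{} and never by \REQ{}, since no cross-power terms in $Y$ ever appear. The subsequent step $(\mathsf{F}(Y) - \ct_{\sf forged}) \cdot \mathsf{P}(Y)$ is multiplication by a single monomial, which reduces to a constant number of slot-wise multiplications by $\enc(1,\dots,1)$ together with exponent shifts. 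Since the computation $\eval_g(\mathsf{L}_1(0), \dots, \mathsf{L}_k(0))$ acts on constant coefficients in $\cc{R}_q^2$ and is the adversary's freely chosen payload rather than attack overhead, the total added cost of the forgery is $O(\log t)$ homomorphic multiplications.

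For correctness, I would evaluate $\mathsf{H}(Y)$ at the two decisive points. At $Y = 0$ we have $\mathsf{P}(0) = (\ct^\star)^{\phi(t)} \cdot 0^{\phi(t)}$, a zero ciphertext, so $\mathsf{H}(0) = \ct_{\sf forged}$ and thus $\dec(\mathsf{H}(0)) = g(\mathbf{m})$---the forged result. At $Y = \alpha$, since every $u_i \in \bb{Z}_t^\times$ and $\alpha \in \bb{Z}_t^\times$, Fermat--Euler gives $u_i^{\phi(t)} \equiv 1$ and $\alpha^{\phi(t)} \equiv 1$ in $\bb{Z}_t$, so $\mathsf{P}(\alpha)$ encrypts $(u_1^{\phi(t)} \alpha^{\phi(t)}, \dots, u_N^{\phi(t)} \alpha^{\phi(t)}) = (1, \dots, 1)$ up to noise. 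Substituting, $\mathsf{H}(\alpha) = \ct_{\sf forged} + (\mathsf{F}(\alpha) - \ct_{\sf forged}) \cdot \enc(1,\dots,1)$ decrypts to $\dec(\mathsf{F}(\alpha)) = f(\mathbf{v})$, which matches the client's precomputed verification value and passes the check.

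The main obstacle to guard against is any accidental invocation of the \REQ{} oracle inside the construction of $\mathsf{P}(Y)$. Because \REQ{} preserves only the value at $Y = 0$ while randomizing the evaluation at $Y = \alpha$ by an unknown blinding vector, using it during the exponentiation would destroy the identity $\dec(\mathsf{P}(\alpha)) = (1,\dots,1)$ and make verification fail. The monomial structure of $\mathsf{L}^\star(Y)$ precisely circumvents this: each intermediate $(\ct^\star)^{2^j} Y^{2^j}$ is a single-term polynomial in $Y$ whose coefficient alone needs relinearization, so \REQ{} is never triggered on $\mathsf{P}$. A clean write-up records this invariant stepwise across the square-and-multiply recursion, after which the two substitutions $Y = 0$ and $Y = \alpha$ complete the proof.
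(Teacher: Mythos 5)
Your correctness argument (evaluating $\mathsf{H}$ at $Y=0$ and $Y=\alpha$ via Euler's theorem) matches the paper's, and taking $g$ to be cheap is in the same spirit as the paper's choice of a constant $g$. But your complexity argument contains a genuine gap: you claim that because $\mathsf{L}^\star(Y)=\ct^\star Y$ is a monomial, the exponentiation can be done entirely at the coefficient level with \relin{} and \emph{without ever invoking} \REQ{}. If you never call \REQ{}, then $\mathsf{P}(Y)=(\ct^\star)^{\phi(t)}Y^{\phi(t)}$ and hence $\mathsf{H}(Y)$ is a polynomial of degree $\phi(t)+O(1)$ in $Y$. The output the server must return then consists of $\Theta(\phi(t))=\Theta(t)$ ciphertext coefficients --- exponential in $\log t$ (recall that for \PE{} the paper notes $t\approx 2^\lambda$), so the claimed $O(\log t)$ cost fails already at the level of writing down the answer; moreover the client in this protocol expects a (re)quadratized, i.e.\ degree-$\le 2$, ciphertext polynomial, so a degree-$\phi(t)$ output is not even a well-formed response. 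This is exactly why the theorem's hypothesis grants the adversary access to the \REQ{} oracle: the paper's proof performs the $\lceil\log\phi(t)\rceil$ squarings \emph{in} $\cc R_q^2[Y]$ and calls \REQ{} after each one to collapse the degree back to $2$, keeping every intermediate (and the final $\mathsf{H}$) quadratic. The paper's follow-up remark makes this explicit: the forgery incurs at least $\lceil\log\phi(t)\rceil$ \emph{extra} \REQ{} calls, which is the one way it can be detected.

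Your stated reason for avoiding \REQ{} --- that it randomizes the value at $Y=\alpha$ and would "destroy" $\dec(\mathsf{P}(\alpha))=(1,\dots,1)$ --- is also not the right conclusion to draw. \REQ{} is a client-aided protocol in which the client computes and corrects the deviation introduced by the blinding vector $\mathbf{r}$, so (treating \REQ{} as an oracle, as the theorem does) the value at $Y=\alpha$ is preserved through re-quadratization just as it is for the honest circuit. Note also that once \REQ{} is applied, the intermediate is a general quadratic rather than a monomial, so the monomial invariant you propose to maintain cannot survive the degree reduction in any case. The fix is simply to adopt the paper's accounting: $\lceil\log\phi(t)\rceil$ multiplications in $\cc R_q^2[Y]$, each followed by a \REQ{} call, for a total cost of $O(\log t)$.
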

        \begin{proof}
            The correctness of the attack circuit is given above.
            Take $g$ to be a constant circuit different from $f$. Then the compuational cost is determined by evaluating $p(y)=y^{\phi(t)}$ at $y=\mathsf L^\star (Y)$, which requires $\lceil\log(\phi(t))\rceil$ multiplications in $\cc R^2_q[Y]$ and $\REQ$ protocol. Therefore the total computational complexity is $O(\log t)$.
        \end{proof}

        \begin{corollary}
            $\PE$ is not secure against the presence of $\cc A_{\sf cov}^{\cc O_{\sf ReQ}}.$
        \end{corollary}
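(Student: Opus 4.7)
The plan is to reduce the corollary directly to the preceding theorem, which provides a deterministic, $O(\log t)$-cost circuit, accessing only the $\REQ$ oracle, that outputs a polynomial $\mathsf{H}(Y) \in \cc R^2_q[Y]$ with $\dec(\mathsf{H}(\alpha)) = \dec(\mathsf{F}(\alpha))$ and $\dec(\mathsf{H}(0)) = \dec(\mathsf{G}(0))$. To turn this deterministic forgery into a witness of covert-adversary insecurity, I would instantiate $\cc A_{\sf cov}^{\cc O_{\sf ReQ}}$ to run Algorithm~\ref{alg: Attack against PE} with a concrete malicious circuit $g$ chosen so that $g(\mathbf{x}) \neq f(\mathbf{x})$ for every $\mathbf{x} \in \bb Z_t^N$; the simplest such choice is the input-independent shift $g(y_1, \dots, y_k) := f(y_1, \dots, y_k) + \mathbf{1}$, which works whenever $t \geq 2$.

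I would then verify the two events required to break covert soundness: acceptance and incorrectness. Acceptance follows because the verification step of $\PE$ consists solely of checking $\dec(\mathsf{H}(\alpha)) = f(\mathbf{v})$, and the identity $\dec(\mathsf{H}(\alpha)) = \dec(\mathsf{F}(\alpha)) = f(\mathbf{v})$ established in the preceding theorem ensures this up to the negligible correctness error inherited from the underlying HE (Definitions~\ref{def:HE_correctness} and~\ref{def:VHE_Completness}). Incorrectness follows from $\dec(\mathsf{H}(0)) = g(\mathbf{m}) \neq f(\mathbf{m})$ by the choice of $g$. Consequently, the probability that the client detects this deviation is at most the underlying HE's correctness error, which is negligible, and in particular strictly smaller than any $1/{\sf poly}(\lambda)$ deterrence factor.

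The main obstacle --- and the reason for the specific choice of $g$ --- is guaranteeing a genuine deviation on the client's unknown input $\mathbf{m}$. A naive choice such as ``any circuit $g \neq f$'' only gives $g(\mathbf{m}) \neq f(\mathbf{m})$ with some probability over $\mathbf{m}$, forcing an extra probabilistic argument and weakening the conclusion. By instead selecting the input-independent shift $g := f + \mathbf{1}$, the inequality $g(\mathbf{x}) \neq f(\mathbf{x})$ holds uniformly on $\bb Z_t^N$, so no probabilistic argument over $\mathbf{m}$ is required, and the corollary follows immediately from the preceding theorem.
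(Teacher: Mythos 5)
Your proposal is correct and follows essentially the same route as the paper: the corollary is an immediate consequence of the preceding theorem, since the deterministic forgery $\mathsf{H}(Y)$ always passes the $\dec(\mathsf{H}(\alpha))=f(\mathbf{v})$ check while decrypting to $g(\mathbf{m})$ at $Y=0$, so no noticeable deterrence factor can be achieved. Your explicit choice $g = f + \mathbf{1}$ to guarantee $g(\mathbf{m})\neq f(\mathbf{m})$ uniformly is a slightly more careful instantiation than the paper's ``take $g$ to be a constant circuit different from $f$,'' but it does not change the argument.
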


        However, this attack can be detected by counting the number of $\sf ReQ$ calls.
        The honest output is ${\sf F}(Y)$, and in order to compute $H(Y)$, one must additionally compute ${\sf P}(Y)$ and then multiply it by ${\sf F}(Y) - \ct_{\sf forged}$.
        Therefore, compared to the honest circuit, the number of $\sf ReQ$ calls increases by at least $\lceil \log \phi(t) \rceil$.
        Thus, in this case, the adversary needs to deceive the client about the number of $\REQ$ calls required to evaluate the honest circuit.
        
%
%

\section{Attack Implementation on $\REP$ and $\REP^{\sf msk}$}\label{sec: Implementation}

We implemented our attacks in the \textsc{Veritas} library, which is based on the Lattigo BFV Implementation~\cite{Lattigo}. Specifically, we implemented the attack on $\REP$ (Algorithm~\ref{alg: deterministic attack on rep}) as well as the attack on $\REP^{\sf msk}$ (Algorithm~\ref{alg: probabilistic attack on rep}). Since the Lattigo library currently does not support BFV bootstrapping, we modified the value of $t$ in the parameters provided by the example code in the \textsc{Veritas} library to ensure that the forgery circuit operates without bootstrapping. However, when bootstrapping is supported, the attack remains feasible regardless of the value of $t$.
\footnote{For \PE, \(t\) must be of size roughly equal to the security parameter \(\lambda\), since the secret \(\alpha\) is chosen from \(\mathbb{Z}_t^\times\). With this parameter setting, the implementation of the attack is not feasible without bootstrapping. Thus, we did not implement attack on \PE~due to the lack of bootstrapping functionality.}. 
Specific parameters are given in Table~\ref{tab: Params}. 

As for vADDG, since the verifiable variant is not implemented, we did not implement the attack. However, since the attack circuit is very shallow and simple, and in particular very similar to the attack on $\REP^{\sf msk}$, the feasibility of the attack on vADDG is straightforward even without a demonstration via implementation. 

\subsection{Attack Description}
The attack on $\REP$ and $\REP^{\sf msk}$ aims to implement the following circuit:
\[
\mathsf{Mult}(\eval_f(\ct), \enc(J_S)) + c \cdot \enc(J_{S^c})
\]
for some nonzero constant $c$.
Since $\enc(J_S) = (1,\dots,1) - \enc(J_{S^c})$, the attack reduces to recovering a scalar multiple of $\enc(J_{S^c})$.
Therefore, the attack cost corresponds to the amount of resources required to recover a scalar multiple of $\enc(J_{S^c})$.

\subsection{Attack Results}
The experiments were performed on an Intel(R) Xeon(R) Silver 4114 CPU at 2.20GHz running Linux in a single-threaded environment. The attack results are summarized in Tables~\ref{tab:Cost},~\ref{tab:Prob}. The time cost only measured evaluation time for $\enc(J_{S^c})$. 

\subsubsection{Attack on \REP.}
In experiment 1 to 4, we implement the attack on $\REP$ described in Algorithm \ref{alg: deterministic attack on rep} with various $k=1,2,4,33$. We generated keys whose rotation indices are divisors of \( n \), assuming a scenario in the adversarial server needs to bootstrap with these rotation keys.
We optimized the attack circuit described in Algorithm~\ref{alg: deterministic attack on rep} as follows. 
Instead of evaluating \(\eval_{\sf Interpolate}\), which maps \(\enc(1,\dots,1)\) to \(\enc(0,\dots,0)\) and \(\enc\left(n/2,\dots, n/2\right)\) to \(\enc(1,\dots,1)\), 
we directly performed the subtraction \(\ct \mapsto \ct - \enc(1, \dots, 1)\). 
This operation maps \(\enc(1,\dots,1)\) to \(\enc(0,\dots,0)\) and \(\enc\left(n/2,\dots,n/2\right)\) to \(\enc\left(n/2 - 1,\dots,n/2-1\right)\). 
As a result, we obtain a ciphertext that corresponds to a constant multiple of \(J_{S^c}\).

According to Theorem~\ref{lem: attack on rep}, the expected attack success probabilities given by \( 1 - \binom{32}{k} / \binom{64}{k} \) are \(50\%\), \(75.40\%\), \(94.34\%\), and \(100\%\) for \(k = 1\), \(2\), \(4\), and \(33\), respectively.
This attack circuit can be implemented without bootstrapping under the parameters 
\((\log N = 15, \log PQ = 880, \log Q = 759, t = 2^{16} +  1)\). 
As shown in Table~\ref{tab:Cost}, the attack time increases linearly with $k$, but
we note that Algorithm~\ref{alg: deterministic attack on rep} is inherently parallelizable,  
as it performs ${\sf CVS}(\ct, i)$ and ${\sf is\_in\_set}(\ct,i)$ independently for each \(i = 1, \dots, k\).
Thus, by parallelizing the execution of ${\sf CVS}(\ct, i)$, the overall attack time can be reduced to under one minute even with $k=33$.
In addition, leveraging GPU-based optimizations could potentially reduce the runtime to just a few seconds. 
This assumption is realistic, given that adversaries in the context of HE are typically assumed to have access to significant computational resources.

\subsubsection{Attack on $\REP^{\sf msk}.$}
We implement the attack on $\REP^{\sf msk}$ described in Algorithm \ref{alg: probabilistic attack on rep}. Since this attack naturally applies to $\REP$, without any modification to the $\REP$ scheme we implemented this attack: Specifically, we did not generate the $\rtk_1$.

Since \(\phi(t) = 2^{16}\), when constructing the characteristic function in Lemma~\ref{bfvprf}, 
we can choose \(d\) to be a power of 2. 
In Experiment~5, where \(n = 64\), we set \(d = 32 \cong \mu = 33\) for the characteristic function. 
As a result, the output becomes a scalar multiple of \(\enc(J_{S^c})\) with non-negligible probability.
According to Theorem~\ref{thm2: univ}, the corresponding expected attack success probability is $\left(1-\frac{2^{11}}{2^{16}+1}\right)^{32} - \left(1-\frac{2^{11}}{2^{16}+1}\right)^{33} \cong 1.13\%$. Also in Experiment 6, where \(n=32\), we set \(d=16 \cong \mu=17\). The corresponding expected attack success probability is $\left(1-\frac{2^{12}}{2^{16}+1}\right)^{16} - \left(1-\frac{2^{12}}{2^{16}+1}\right)^{17} \cong 2.22\%$. The parameters and results are presented in Table~\ref{tab: Params},\ref{tab:Cost} and \ref{tab:Prob}.

\begin{table}[htb!]
\centering
\begin{tabular}{|cc|c|c|c|c|c|c|}
\hline
\multicolumn{2}{|c|}{Parameters} & $\log N$ & $\log PQ$ & $\log Q$ & \makecell{$t$} & \makecell {$|S|=|\{1,\dots,n\}|$} & $k$\\ \hline \hline
\multicolumn{1}{|c|}{\multirow{4}{*}{\REP{} Attack}} & Exp1 &  \multirow{4}{*}{15} & \multirow{4}{*}{880} & \multirow{4}{*}{759} & \multirow{4}{*}{$2^{16}+1$} &\multirow{4}{*}{64} & 1 \\ \cline{2-2} \cline{8-8}
\multicolumn{1}{|c|}{} & Exp2 &  &  &  & & & 2 \\ \cline{2-2} \cline{8-8} 
\multicolumn{1}{|c|}{} & Exp3 &  &  &  & & & 4 \\ \cline{2-2} \cline{8-8}
\multicolumn{1}{|c|}{} & Exp4 &  &  &  & & & 33 \\ \cline{1-2} \cline{7-8} \hline\hline

\multicolumn{1}{|c|}{\multirow{2}{*}{\makecell{$\REP^{\sf msk}$Attack}}} & Exp5 &  \multirow{2}{*}{15} & \multirow{2}{*}{880} & \multirow{2}{*}{759} & \multirow{2}{*}{$2^{16}+1$} & 64 & -\\ \cline{2-2} \cline{7-8} 
\multicolumn{1}{|c|}{} & Exp6 &  &  &  &  & 32 & -\\ \hline
\end{tabular}
\caption{Parameters Selection}
\label{tab: Params}
\begin{tabular}{|cc|c|c|}
\hline
\multicolumn{2}{|c|}{\makecell{Evaluation Time\\(Sec/op)}} & \makecell{Attack Time \\(Evaluating $\enc(J_{S^c})$)} & Remark\\ \hline\hline

\multicolumn{1}{|c|}{\multirow{4}{*}{\makecell{$\REP$ Attack}}} & Exp1  & 8.37s & \multirow{4}{*}{\makecell{$\rtk_1$ generated.}}\\ \cline{2-3}
\multicolumn{1}{|c|}{} & Exp2  & 15.91s & \\ \cline{2-3}
\multicolumn{1}{|c|}{} & Exp3  & 29.67s & \\ \cline{2-3}
\multicolumn{1}{|c|}{} & Exp4  & 246.54s & \\ \hline\hline

\multicolumn{1}{|c|}{\multirow{2}{*}{\makecell{$\REP^{\sf msk}$Attack}}} & Exp5  & 6.05s  & -\\ \cline{2-4} 
\multicolumn{1}{|c|}{} & Exp6  & 6.36s & -\\ \hline
\end{tabular}
\caption{Attack Circuit Evaluation Cost}
\label{tab:Cost}
\begin{tabular}{|cc|c|c|c|}
\hline
\multicolumn{2}{|c|}{Probability} & \makecell{Theoretical Attack \\ Success Probability} & \makecell{Experimental Attack \\ Success Probability} & {\# of Iterations}\\ \hline\hline

\multicolumn{1}{|c|}{\multirow{4}{*}{\makecell{\REP{} Attack}}} & Exp1 & 50.00\% & 51.8\% & 1000 \\ \cline{2-5} 
\multicolumn{1}{|c|}{} & Exp2 & 75.40\% & 75.5\% & 1000 \\ \cline{2-5} 
\multicolumn{1}{|c|}{} & Exp3 & 94.34\% & 93.4\% & 1000 \\ \cline{2-5} 
\multicolumn{1}{|c|}{} & Exp4 & 100\% & 100\% & 1 \\
\hline\hline

\multicolumn{1}{|c|}{\multirow{2}{*}{\makecell{$\REP^{\sf msk}$Attack}}} & Exp5 & 1.13\% & 0.8\% &  1000 \\ \cline{2-5} 
\multicolumn{1}{|c|}{} & Exp6 & 2.22\% & 1.8\% & 1000 \\ \hline
\end{tabular}
\caption{Attack success probability (1000 iterations unless otherwise noted)}
\label{tab:Prob}
\end{table}

We remark that the drop in success probability observed in Exp.~5 and Exp.~6 is within one standard deviation.
Also, due to the lack of BFV bootstrapping functionality in the Lattigo library, we implemented the attacks on $\REP$ and $\REP^{\sf msk}$ only for specific parameter settings, and did not implement the attack on $\PE$.  
However in a bootstrappable setting, not only is an attack on $\PE$ feasible, but the above attacks could also be carried out independently of the parameter settings.


\section{Discussion}\label{sec: Discussion}
    \subsection{Homomorphic Cryptanalysis}
    To provide verifiability, the schemes considered in this paper utilize auxiliary secret information in addition to the secret key of the HE scheme: a set of indices $S$ in \REP{} (which also applies essentially to vADDG) and an element $\alpha \in \mathbb{Z}_t^\times$ in \PE{}. Attacks against the schemes described in this paper consist of two steps as follows; 

        \begin{enumerate}
            \item Recover secret values $S$ or $\alpha$ in encrypted state.
            \item Modify a legitimate ciphertext using the encryption of secret values.
        \end{enumerate}

        \subsubsection{Step 1: Recovering Secret Information in an encrypted state.}
        The first step is to recover secret values, $S$ or $\alpha$, in an encrypted state.
        For \REP, we homomorphically find $S$ in two other ways: one way is based on the homomorphic comparison which deterministically recovers $\enc(S)$, and the other way is based on the random characteristic function evaluation, which probabilistically outputs $\enc(S)$ with non-negligible probability.
        
        For \PE, if $\enc(\alpha)$ is recovered we can modify a legitimate ciphertext easily. However, \REQ{} adopted a re-randomized process to make recovering $\enc(\alpha)$ hard. Here the proposed attack does not learn $\alpha$. This is not only because, as mentioned earlier, \PE{} encoding can be performed without knowledge of $\alpha$ or $\enc(\alpha)$, but it can also be explained from the following perspective:
        We can think of the ciphertext polynomial $\mathsf F(Y)\in \cc R^2_q[Y]$ as a ciphertext with two secret keys $$(1,s)\otimes(1,\alpha,\dots,\alpha^{\deg F}), \;\;(1,s)\otimes(1,0,\dots,0).$$ 
        For example, we think of the ciphertext $\ct_0+\ct_1Y$ as an ciphertext with two secret keys $(1,s)\otimes(1,\alpha)=(1,s,\alpha,\alpha s)$ and $(1,s)\otimes(1,0)=(1,s,0,0)$, in the meaning that $(\ct_0,\ct_1)=\big((b_0,a_0),(b_1,a_1)\big)$ would be decrypted by the secret key $(1,s,\alpha,\alpha s)$ or $(1,s,0,0)$. 
        In this sense, 
        $$\big((b_0,a_0),(b_1,a_1)\big)=\big(\enc(0,\dots,0),\enc(1,\dots 1)\big)$$
        is an encryption of $\alpha$ under $(1,s,\alpha,\alpha s)$ and an encryption of $0$ under $(1,s,0,0)$.
        Indeed, one of the possible choice of $\mathsf L^\star(Y)$ is $\enc(0,\dots,0)+\enc(1,\dots, 1)Y$, which is as an encryption of $\alpha$ and $0$ in this very sense.
        
        \subsubsection{Step 2: Modify a legitimate ciphertext}
        The second step is to homomorphically generate a cheating circuit using the information on secret values $\alpha$ or $S$.
        In the plaintext state, the forgery of the secret encoding is just a composition of decoding and encoding, where the decoding and encoding are performed with the aid of the secret values $S$ or $\alpha$, namely $\mathsf{Dcd_{sv}}$ and $\mathsf{Ecd_{sv}}$ where ${\sf sv}$ stands for secret values $S$ or $\alpha$. 
        Thus the attack is just a homomorphic evaluation of such cheating circuits, with the aid of the information of $S$ or $\alpha$ in an  encrypted state: 
        See Fig. \ref{FigForging}.
        
                \begin{figure}[htb!]
                    \centering
                    \begin{tikzpicture}
                      \matrix (m) [matrix of math nodes,row sep=4em,column sep=7em,minimum width=2em]
                      {
                        \nobarfrac{\enc\big(\mathsf{Ecd}(f(v), f(m))\big)}{\enc\big(\mathsf{Ecd}(g(v),g(m))\big)} &\enc\big(\mathsf{Ecd}(f(v),g(m))\big)\\
                        \nobarfrac{\mathsf{Ecd}(f(v), f(m))}{\mathsf{Ecd}(g(v),g(m))}&\mathsf{Ecd}(f(v),g(m)) \\
                        \nobarfrac{f(v),f(m),}{g(v),g(m)} & f(v),g(m)\\
                      };
                      \path[]
                        (m-1-1) edge [-stealth] node[above]{\sf Eval$_{\mathsf{Forge}_\mathsf{sv}}$} (m-1-2)
                                edge [<->] node[left] {$\mathsf{Dec_{sk}}$} node[right] {$\mathsf{Enc_{pk}}$} (m-2-1) 
                        (m-1-2) edge [<->] node[left] {$\mathsf{Dec_{sk}}$} node[right] {$\mathsf{Enc_{pk}}$} (m-2-2)
                        (m-2-1) edge [-stealth] node [above] {\sf Forge$_{\mathsf{sv}}$}(m-2-2)
                                edge [<->] node[left] {$\mathsf{Dcd_{sv}}$} node[right] {$\mathsf{Ecd_{sv}}$} (m-3-1)
                        (m-3-1) edge [-stealth] node[below]{\sf Select} (m-3-2)
                        (m-3-2) edge [<->] node[left] {$\mathsf{Dcd_{sv}}$} node[right] {$\mathsf{Ecd_{sv}}$} (m-2-2)
                        ;
                       \draw[dashed] (-7.2, 0.85) -- (-4, 0.85);
                       \draw[dashed] (-1.5, 0.85) -- (1.4, 0.85);
                       \draw[dashed] (3.65, 0.85) -- (5, 0.85);
                
                       \draw[dashed] (-7.2, -1.15) -- (-4, -1.15);
                       \draw[dashed] (-1.5, -1.15) -- (1.4, -1.15);
                       \draw[dashed] (3.65, -1.15) -- (5, -1.15);
                
                       \node at (-5.5,2.0) {Ciphertext Space};
                       \node at (-5.5,0.35) {Plaintext Space};
                       \node at (-5.5,-1.65) {Message Space};
                    \end{tikzpicture}
                    \caption{An Overview of Homomorphic Forgery on Lightweight VHE. $f$ and $g$ denote the requested circuit and a malicious circuit, respectively.} \label{FigForging}
                \end{figure}
                
        For \REP{}, the secret value for Decoding and Encoding is the verification slot $S$. Accordingly, the homomorphic decoding is just masking a ciphertext by multiplying the encryption of $J_S$ and $J_{S^c}$, and the homomorphic encoding is adding ciphertexts masked by $J_S$ and $J_{S^c}$.

        For \PE{}, the decoding involves evaluating at \(Y=\alpha\) and \(Y=0\), while the encoding process interpolates \(\mathbf{m}, \mathbf{v} \in \mathbb{Z}_t^N\) at \(Y=0\) and \(Y=\alpha\) to obtain
        \[
        \mathbf{m} + (\mathbf{v}-\mathbf{m})\alpha^{-1}\cdot Y.
        \]
        According to the earlier perspective that \( {\sf L}^\star := \enc(0) + \enc(1)Y \) can be seen as an encryption of \(0\) and \(\alpha\), interpolating the two ciphertext polynomials \({\sf F}\) and \({\sf G}\) yields
        \begin{align*}   
        &{\sf G}+({\sf F}-{\sf G})\cdot ({\sf L}^\star)^{-1}Y \\
        = &{\sf G}+({\sf F}-{\sf G})\cdot ({\sf L}^\star)^{\phi(t)-1}Y.
        \end{align*}
        This exhibits the structure of the previously discussed cheating circuit.

\section{Conclusion}
In this paper, we presented attacks against the verifiable homomorphic encryption schemes introduced in \cite{VE, cryptoeprint:PRF}.

For the vADDG scheme of \cite{cryptoeprint:PRF}, we introduced a shallow pseudorandom characteristic function to overcome the limitation imposed by bootstrapping depth. The attack succeeds with probability $\left(1-(1/2)^\alpha\right)\cdot (1/2)^\beta$ and has computational cost $O(\gamma)$.  
For \REP{} in \cite{VE}, we constructed a circuit that homomorphically computes the position vector of the common values. This attack has cost $O(k\log t)$ and succeeds with probability 1 when $k=n/2+1$.  
Our patching solution using multiple secret keys reduces the success probability to \(O(n^{-1})\), which yields only very weak security, arguably sufficient only against a covert adversary.  
For \PE{}, we exploit Euler's theorem to construct a deterministic forgery attack with computational cost $O(\log t)$.

One possible approach to decreasing the success probability of attacks is to restrict the circuit depth that can be homomorphically evaluated. However, as demonstrated by the attack on vADDG, a cheating circuit can still be constructed even at shallow depths. Moreover, an adversary may perform forgery despite risking computation failure due to noise robustness. Hence, the approach of limiting depth requires more thorough analysis.

Another approach is to employ multiple secrets, as discussed in Section~\ref{subsec: diff keys}.  
This restricts the class of HE circuits that can be evaluated.  
However, as shown in Section~\ref{subsec: diff keys}, simply assigning different secrets slotwise is insufficient to achieve malicious security.  
Thus, this approach of using multiple secrets also requires further and more precise analysis.


\section*{Acknowledgement}
We would like to thank Damien Stehl{\'e} for his guidance and valuable comments on this research.  
We also thank Changmin Lee and the anonymous Asiacrypt2025 reviewers for their helpful feedback.
This research was supported by the Institute of Information and Communications Technology Planning and Evaluation (IITP), grant funded by the Korea Government (MSIT) (RS-2021-II210727, A Study on Cryptographic Primitives for SNARK)


\bibliographystyle{alpha}
\bibliography{biblo}

\newcommand{\etalchar}[1]{$^{#1}$}
\begin{thebibliography}{ACGaSV24}

\bibitem[ABPS24]{SnarkVHE}
Shahla Atapoor, Karim Baghery, Hilder V.~L. Pereira, and Jannik Spiessens.
\newblock Verifiable {FHE} via lattice-based {SNARK}s.
\newblock {\em {IACR} {C}ommunications in {C}ryptology}, 1(1), 2024.

\bibitem[ACGaSV24]{heliopolis}
Diego~F. Aranha, Anamaria Costache, Antonio Guimar\~{a}es, and Eduardo Soria-Vazquez.
\newblock Heliopolis: Verifiable computation over homomorphically encrypted data from interactive oracle proofs is practical.
\newblock In {\em Advances in Cryptology – ASIACRYPT 2024: 30th International Conference on the Theory and Application of Cryptology and Information Security, Kolkata, India, December 9–13, 2024, Proceedings. Part V}, page 302–334, Berlin, Heidelberg, 2024. Springer-Verlag.

\bibitem[ADDG24]{cryptoeprint:PRF}
Martin~R. Albrecht, Alex Davidson, Amit Deo, and Daniel Gardham.
\newblock Crypto dark matter on the torus: Oblivious {PRF}s from shallow {PRF}s and {TFHE}.
\newblock In {\em Advances in Cryptology – EUROCRYPT 2024: 43rd Annual International Conference on the Theory and Applications of Cryptographic Techniques, Zurich, Switzerland, May 26–30, 2024, Proceedings, Part VI}, page 447–476, Berlin, Heidelberg, 2024. Springer-Verlag.

\bibitem[AL07]{Covert}
Yonatan Aumann and Yehuda Lindell.
\newblock Security against covert adversaries: efficient protocols for realistic adversaries.
\newblock In {\em Proceedings of the 4th Conference on Theory of Cryptography}, TCC'07, page 137–156, Berlin, Heidelberg, 2007. Springer-Verlag.

\bibitem[BCFK21]{Flexible}
Alexandre Bois, Ignacio Cascudo, Dario Fiore, and Dongwoo Kim.
\newblock Flexible and efficient verifiable computation on encrypted data.
\newblock In {\em Public-Key Cryptography – PKC 2021: 24th IACR International Conference on Practice and Theory of Public Key Cryptography, Virtual Event, May 10–13, 2021, Proceedings, Part II}, page 528–558, Berlin, Heidelberg, 2021. Springer-Verlag.

\bibitem[BGV12]{BGV}
Zvika Brakerski, Craig Gentry, and Vinod Vaikuntanathan.
\newblock (leveled) fully homomorphic encryption without bootstrapping.
\newblock In {\em Proceedings of the 3rd Innovations in Theoretical Computer Science Conference}, ITCS '12, page 309–325, New York, NY, USA, 2012. Association for Computing Machinery.

\bibitem[Bra12]{Bra}
Zvika Brakerski.
\newblock Fully homomorphic encryption without modulus switching from classical gapsvp.
\newblock In Reihaneh Safavi-Naini and Ran Canetti, editors, {\em Advances in Cryptology -- CRYPTO 2012}, pages 868--886, Berlin, Heidelberg, 2012. Springer Berlin Heidelberg.

\bibitem[CCRS24]{HomomorphicSignautre}
Davide Carnemolla, Dario Catalano, Mario~Di Raimondo, and Federico Savasta.
\newblock Implementation and performance analysis of homomorphic signature schemes.
\newblock Cryptology {ePrint} Archive, Paper 2024/655, 2024.

\bibitem[CF13]{HomMAc1}
Dario Catalano and Dario Fiore.
\newblock Practical homomorphic macs for arithmetic circuits.
\newblock In Thomas Johansson and Phong~Q. Nguyen, editors, {\em Advances in Cryptology -- EUROCRYPT 2013}, pages 336--352, Berlin, Heidelberg, 2013. Springer Berlin Heidelberg.

\bibitem[CGGI20]{CGGI}
Ilaria Chillotti, Nicolas Gama, Mariya Georgieva, and Malika Izabach\`{e}ne.
\newblock {TFHE}: Fast fully homomorphic encryption over the torus.
\newblock {\em J. Cryptol.}, 33(1):34–91, January 2020.

\bibitem[CH18]{digitextraction}
Hao Chen and Kyoohyung Han.
\newblock Homomorphic lower digits removal and improved fhe bootstrapping.
\newblock In Jesper~Buus Nielsen and Vincent Rijmen, editors, {\em Advances in Cryptology -- EUROCRYPT 2018}, pages 315--337, Cham, 2018. Springer International Publishing.

\bibitem[CHLR18]{CHLR18}
Hao Chen, Zhicong Huang, Kim Laine, and Peter Rindal.
\newblock Labeled psi from fully homomorphic encryption with malicious security.
\newblock In {\em Proceedings of the 2018 ACM SIGSAC Conference on Computer and Communications Security}, CCS '18, page 1223–1237, New York, NY, USA, 2018. Association for Computing Machinery.

\bibitem[CKKS17]{CKKS}
Jung~Hee Cheon, Andrey Kim, Miran Kim, and Yongsoo Song.
\newblock Homomorphic encryption for arithmetic of approximate numbers.
\newblock In Tsuyoshi Takagi and Thomas Peyrin, editors, {\em Advances in Cryptology -- ASIACRYPT 2017}, pages 409--437, Cham, 2017. Springer International Publishing.

\bibitem[CKP{\etalchar{+}}24]{VE}
Sylvain Chatel, Christian Knabenhans, Apostolos Pyrgelis, Carmela Troncoso, and Jean-Pierre Hubaux.
\newblock {VERITAS}: Plaintext encoders for practical verifiable homomorphic encryption.
\newblock CCS '24, New York, NY, USA, 2024. Association for Computing Machinery.

\bibitem[Dam90]{LPRF}
Ivan~Bjerre Damg{\aa}rd.
\newblock On the randomness of {L}egendre and {J}acobi sequences.
\newblock In Shafi Goldwasser, editor, {\em Advances in Cryptology --- CRYPTO' 88}, pages 163--172, New York, NY, 1990. Springer New York.

\bibitem[FGP14]{Efficiently}
Dario Fiore, Rosario Gennaro, and Valerio Pastro.
\newblock Efficiently verifiable computation on encrypted data.
\newblock In {\em Proceedings of the 2014 ACM SIGSAC Conference on Computer and Communications Security}, CCS '14, page 844–855, New York, NY, USA, 2014. Association for Computing Machinery.

\bibitem[FNP20]{BoostingVCE}
Dario Fiore, Anca Nitulescu, and David Pointcheval.
\newblock Boosting verifiable computation on encrypted data.
\newblock In Aggelos Kiayias, Markulf Kohlweiss, Petros Wallden, and Vassilis Zikas, editors, {\em Public-Key Cryptography -- PKC 2020}, pages 124--154, Cham, 2020. Springer International Publishing.

\bibitem[FV12]{bfv}
Junfeng Fan and Frederik Vercauteren.
\newblock Somewhat practical fully homomorphic encryption.
\newblock Cryptology ePrint Archive, Paper 2012/144, 2012.
\newblock \url{https://eprint.iacr.org/2012/144}.

\bibitem[GBK{\etalchar{+}}24]{BlindzkSNARKsvCOED}
Mariana Gama, Emad~Heydari Beni, Jiayi Kang, Jannik Spiessens, and Frederik Vercauteren.
\newblock Blind {zkSNARKs} for private proof delegation and verifiable computation over encrypted data.
\newblock Cryptology {ePrint} Archive, Paper 2024/1684, 2024.

\bibitem[Gen09]{Gen09}
Craig Gentry.
\newblock Fully homomorphic encryption using ideal lattices.
\newblock In {\em Proceedings of the Forty-First Annual ACM Symposium on Theory of Computing}, STOC '09, page 169–178, New York, NY, USA, 2009. Association for Computing Machinery.

\bibitem[GGP10]{First_VC}
Rosario Gennaro, Craig Gentry, and Bryan Parno.
\newblock Non-interactive verifiable computing: Outsourcing computation to untrusted workers.
\newblock In Tal Rabin, editor, {\em Advances in Cryptology -- CRYPTO 2010}, pages 465--482, Berlin, Heidelberg, 2010. Springer Berlin Heidelberg.

\bibitem[GGW24]{HowtoProveStatementsObliviously}
Sanjam Garg, Aarushi Goel, and Mingyuan Wang.
\newblock How to prove statements obliviously?
\newblock In {\em Advances in Cryptology – CRYPTO 2024: 44th Annual International Cryptology Conference, Santa Barbara, CA, USA, August 18–22, 2024, Proceedings, Part X}, page 449–487, Berlin, Heidelberg, 2024. Springer-Verlag.

\bibitem[GNSV23]{Rinocchio}
Chaya Ganesh, Anca Nitulescu, and Eduardo Soria-Vazquez.
\newblock Rinocchio: Snarks for ring arithmetic.
\newblock {\em Journal of Cryptology}, 36, 2023.

\bibitem[GV23]{BFVHElib}
Robin Geelen and Frederik Vercauteren.
\newblock Bootstrapping for {BGV} and {BFV} revisited.
\newblock {\em J. Cryptol.}, 36(2), March 2023.

\bibitem[GW13]{Hommac2}
Rosario Gennaro and Daniel Wichs.
\newblock Fully homomorphic message authenticators.
\newblock In Kazue Sako and Palash Sarkar, editors, {\em Advances in Cryptology - ASIACRYPT 2013}, pages 301--320, Berlin, Heidelberg, 2013. Springer Berlin Heidelberg.

\bibitem[Joy22]{Joy21}
Marc Joye.
\newblock Guide to fully homomorphic encryption over the [discretized] torus.
\newblock {\em IACR Transactions on Cryptographic Hardware and Embedded Systems}, 2022(4):661–692, Aug. 2022.

\bibitem[KPZ21]{bfv-bgvconversion}
Andrey Kim, Yuriy Polyakov, and Vincent Zucca.
\newblock Revisiting homomorphic encryption schemes for finite fields.
\newblock In Mehdi Tibouchi and Huaxiong Wang, editors, {\em Advances in Cryptology -- ASIACRYPT 2021}, pages 608--639, Cham, 2021. Springer International Publishing.

\bibitem[LHW{\etalchar{+}}25]{SoK}
Junkai Liang, Daqi Hu, Pengfei Wu, Yunbo Yang, Qingni Shen, and Zhonghai Wu.
\newblock {SoK}: Understanding zk-{SNARKs}: The gap between research and practice.
\newblock Cryptology {ePrint} Archive, Paper 2025/172, 2025.

\bibitem[MBTPH20]{Lattigo}
Christian~Vincent Mouchet, Jean-Philippe Bossuat, Juan~Ramón Troncoso-Pastoriza, and Jean-Pierre Hubaux.
\newblock Lattigo: A multiparty homomorphic encryption library in go.
\newblock page 64–70, 2020.

\bibitem[SXLS24]{DataSeal}
Muhammad~Husni Santriaji, Jiaqi Xue, Qian Lou, and Yan Solihin.
\newblock Dataseal: Ensuring the verifiability of private computation on encrypted data.
\newblock {\em CoRR}, abs/2410.15215, 2024.

\bibitem[VKH23]{vFHE}
Alexander Viand, Christian Knabenhans, and Anwar Hithnawi.
\newblock Verifiable fully homomorphic encryption, 2023.

\bibitem[ZWL{\etalchar{+}}25]{FHESNARKSNARKFHE}
Xinxuan Zhang, Ruida Wang, Zeyu Liu, Binwu Xiang, Yi~Deng, and Xianhui Lu.
\newblock {FHE}-{SNARK} vs. {SNARK}-{FHE}: From analysis to practical verifiable computation.
\newblock Cryptology {ePrint} Archive, Paper 2025/302, 2025.

\end{thebibliography}


\newpage
\appendix
\section{BFV Scheme}\label{Appendix BFV}

We recall the details of the BFV homomorphic encryption scheme~\cite{Bra, bfv}. 
For more comprehensive descriptions, we refer the reader to~\cite{BFVHElib}.

\subsubsection{Structure of $\cc R_t$.}
Let $M$ be a power of $2$ and define the cyclotomic ring 
$\mathcal R = \mathbb Z[X]/(\Phi_M(X))$ and $\mathcal R_q=\mathcal R/q\mathcal R$ for $q\in \mathbb N$.
Let $p$ be a prime and let $d$ be the multiplicative order of $p$ in $\mathbb Z_M^\times$.
For $t=p^r$, there exists an injective ring homomorphism $\mathbb Z_t^N\hookrightarrow \mathcal R_t$ where $N=M/d$.
Moreover, the Galois automorphisms on $\mathcal R_t$ corresponds to some permutation actions, called a `rotation’, on $\mathbb Z_t^N$.
Thus, operations on slots (i.e., on $\mathbb Z_t^N$) are realized via the ring operations in $\mathcal R_t$.

When encoding data from $\mathbb Z_t$ into $\mathcal R_t$, one may either use the injective ring homomorphism described above or directly encode values into the coefficients of a polynomial in $\mathcal R_t$. 
The former is referred to as \emph{slot encoding}, while the latter is called \emph{coefficient encoding}. 
Note that coefficient encoding is not ring-homomorphic, but it is occasionally used in special cases. 

\subsubsection{BFV Scheme.} The BFV scheme is a homomorphic encryption scheme that supports both ring operations and Galois automorphisms over $\mathcal R_t$. 
Its ciphertext space is $\mathcal R_q^2$ for $q>t$.
Let $\Delta=\lfloor q/t\rfloor$, and let $\chi_{\mathsf{key}}$ and $\chi_{\mathsf{err}}$ be probability distributions over $\mathcal R_q$.
\begin{itemize}
    \item \textsf{KeyGen}: Sample a secret key $s \gets \chi_{\mathsf{key}}$ and generate the associated evaluation keys. The relinearization key $\rlk$ is generated so that, for any $a \in \mathcal R_q$, one can (oblivious to $s$) find $\alpha,\beta \in \mathcal R_q$ such that 
    \([as^2]_q \;\cong\; [\alpha s + \beta]_q.\)
    Similarly, for a Galois automorphism $\sigma$, the rotation key $\rtk_\sigma$ is generated so that, 
    for any $a \in \mathcal R_q$, one can (oblivious to $s$) find $\alpha,\beta \in \mathcal R_q$ such that 
    \([a \sigma(s)]_q \cong [\alpha s + \beta]_q.\)

    \item \textsf{Enc}: For a message $m\in \mathcal R_t$ and secret key $s$, sample $c_1\gets U(\mathcal R_q)$ and $e\gets \chi_{\mathsf{err}}$, and output  
    \(
    (c_1,c_0) = (c_1,\; -c_1s+\Delta m +e) \in \mathcal R_q^2.
    \)
    \item \textsf{Dec}: For a ciphertext $(c_1,c_0)$ and secret key $s$, output  
    \(
    \left\lfloor \tfrac{t}{q}[c_1s+c_0]_q \right\rceil \in \mathcal R_t.
    \)
    \item \textsf{Eval}: For simplicity, here we identify a plaintext $m$ with an insecure ciphertext $(0,\Delta m)$, so that plaintext-ciphertext operations can be described as ciphertext–ciphertext operations.  
        \begin{itemize}
            \item \textsf{Add}: Given $(c_1,c_0)$ and $(c_1',c_0')$, output  
            \(
            (c_1+c_1', c_0+c_0').
            \)
            \item \textsf{Mult}: Given $(c_1,c_0)$ and $(c_1',c_0')$, first compute  
            \(
            (d_2,d_1,d_0) = (c_1c_1', c_1c_0'+c_1c_0', c_0c_0').
            \)  
            Scale each by $t/q$, round, and mod-out $q$, i.e., $([\lfloor \tfrac{t}{q}d_2\rceil]_q,$ $[\lfloor \tfrac{t}{q}d_1\rceil]_q, [\lfloor \tfrac{t}{q}d_0\rceil]_q)$.  
            Then, using $\rlk$, convert $[\lfloor \tfrac{t}{q}d_2\rceil]_q\in \cc R_q$ into $(\alpha,\beta)$ such that $[\lfloor \tfrac{t}{q}d_2\rceil s^2]_q \cong [\alpha s+\beta]_q$.  
            Output  
            \(
            \bigl([\lfloor \tfrac{t}{q}d_1\rceil+\alpha]_q,\; [\lfloor \tfrac{t}{q}d_0\rceil+\beta \bigr]_q).
            \)
            \item \textsf{Rot$_{\sigma}$}: Given $(c_1,c_0)$ and a Galois automorphism $\sigma$, compute $(\sigma(c_1),\sigma(c_0))$.  
            Using $\rtk_{\sigma}$, convert $\sigma(c_1)$ into $(\alpha,\beta)$ such that $[\sigma(c_1)\sigma(s)]_q \cong [\alpha s+\beta]_q$.  
            Output
            \(
            (\alpha,\; [\sigma(c_0)+\beta]_q).
            \)
        \end{itemize}

\end{itemize}

    \paragraph{Modulus Switching.} One can change the ciphertext modulus $q$. 
    For $(c_1,c_0)\in \mathcal R_q^2$, the pair
    \(
        \bigl(\,\lfloor \tfrac{q'}{q}c_1 \rceil,\; \lfloor \tfrac{q'}{q}c_0 \rceil \,\bigr) \in \mathcal R_{q'}^2
    \)
    is an encryption of the same plaintext, and this procedure is called modulus switching.

    \paragraph{Slot–Coefficient Transformation.}
    When computations on the coefficients themselves are required, it is also possible to switch between slot encoding and coefficient encoding in the encrypted state. 
    Such conversions essentially amount to $\mathbb Z_t$-linear algebra over $\mathcal R_t$ 
    and can be performed using only rotation keys and the relinearization key.
    
\section{Attacks on $\REP$ and $\REP^\mathsf{msk}$ for $t=p^r$}\label{Appendix A}

The attack circuit described in the main text is a function over $\mathbb{Z}_t$.  
When $t=p$ is prime, every function can be represented by a polynomial.  
However, when $t$ is not prime, not every function $f:\mathbb{Z}_t \to \mathbb{Z}_t$ can be represented by a polynomial over $\mathbb{Z}_t$, since in this case $p$ is not invertible in $\mathbb{Z}_t$.
Depending on the underlying HE scheme, however, there exists a way to overcome this issue.  
This appendix provides additional technical details for extending our attacks to the case $t=p^r$ on the BFV scheme. For BGV, one can apply the standard BFV-BGV ciphertext conversion 
(cf. Appendix~A in~\cite{bfv-bgvconversion}).

\subsection{Evaluating Non-polynomial Functions.}
\subsubsection{Divide-by-$p$ and Its Inverse.}
One can evaluate the `divide-by-$p$' function
\[
    p\mathbb{Z}_{p^r} \;\overset{p^{-1}}{\longrightarrow}\; \mathbb{Z}_{p^{r-1}}
\]
together with its inverse operation.
Consider a BFV ciphertext $(c_0,c_1)\in \cc R^2_q$ satisfying $c_0+c_1s = \Delta p m + e + qI$, where $pm$ is the plaintext of multiple of $p$ and $e$ is error. If we interpret the scaling factor as $\Delta' = \Delta p$,  
then the plaintext modulus is naturally $p^{r-1}$,  
and the ciphertext can be regarded as an encryption of $m$ under this modulus.

\subsubsection{Divide-by-$p^{r-1}$ without Modulus Reduction.}
However, while the above operation is computationally free, it has the drawback of reducing the plaintext modulus, which in turn limits the available $p$-adic precision. We now show that it is possible to implement the following operation for a BFV ciphertext without reducing the plaintext modulus:
\[
p^{r-1}\cdot \mathbb{Z}_{p^r} \xrightarrow{(p^{r-1})^{-1}} \mathbb{Z}_{p^r}.
\]

First, we modulus-switch the BFV ciphertext so that $q$ becomes a multiple of $t=p^r$ ($q=\Delta \cdot p^r$). 
Then, assuming via a slot–coefficient transformation that the message is coefficient-encoded, 
the BFV ciphertext $(c_0, c_1)\in \mathcal{R}^2_q$ satisfies
\(
    c_0 + c_1 s = qI + \Delta p^{r-1} m+ e.
\)
Then,
\(
\left(\left\lfloor \tfrac{c_0}{p^{r-1}}\right\rceil,\; \left\lfloor \tfrac{c_1}{p^{r-1}}\right\rceil\right) 
   \in \mathcal{R}^2_{q/p^{r-1}}
\)
satisfies
\begin{align*}
\left\lfloor \tfrac{c_0}{p^{r-1}} \right\rceil + \left\lfloor \tfrac{c_1}{p^{r-1}} \right\rceil s
&= \tfrac{q}{p^{r-1}} I + \Delta m 
   + \left\lfloor \tfrac{e}{p^{r-1}} \right\rceil 
   + (\delta_0+\delta_1 s+\delta_2)\\
&= \Delta (pI+m) + \left\lfloor \tfrac{e}{p^{r-1}} \right\rceil 
   + (\delta_0+\delta_1 s+\delta_2)
\end{align*}
where the $\delta_i$ are small polynomials arising from the rounding, of size less than $1$.
Embedding this ciphertext back into $\mathcal{R}^2_q$, 
it becomes an encryption of 
\(
(pI+m) \bmod p^r.
\)
By subsequently applying the digit-extraction polynomial~\cite{digitextraction} on the coefficients, one can recover an encryption of $m \bmod p^r$. This operation is computationally expensive, involving slot-coefficient trasformations, yet it can be evaluated in polynomial time.

\subsection{Attacks on $\REP$}
We first modify the $\sf Compare$ function by evaluating its multiple $\phi(t)\cdot \sf Compare$.  
Note that $\phi(t)=\phi(p^r)=(p-1)p^{r-1}.$

\begin{lemma}
$\mathsf{Eval}_{\phi(t)\mathsf{Compare}}$ can be evaluated with $O(\log t)$ homomorphic multiplications with plaintext modulus $t=p^r$ for $r>1$.
\end{lemma}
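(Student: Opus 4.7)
The approach generalizes the Fermat-based identity $\mathsf{Compare}(x,y) = 1 - (x-y)^{t-1}$ from the prime case by replacing the exponent with $\phi(t) = (p-1)p^{r-1}$ and pre-multiplying by $\phi(t)$. The motivation is that, although $\mathsf{Compare}$ itself is not representable by a polynomial over $\Z_{p^r}$ when $r > 1$, its multiple $\phi(t)\,\mathsf{Compare}$ (whose image lies in the small ideal $p^{r-1}\Z_{p^r}$) admits a clean Euler-type formula.

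The key algebraic observation is that for every $z \in \Z_{p^r}$, the power $z^{\phi(t)}$ is Boolean-valued: by Euler's theorem $z^{\phi(t)} = 1$ for units, while for $z = p^k u$ with $k \ge 1$ and $u$ a unit one computes $z^{\phi(t)} = p^{k\phi(t)} u^{\phi(t)} \equiv 0 \pmod{p^r}$, using the inequality $\phi(t) \ge p^{r-1} \ge r$ valid for all primes $p \ge 2$ and $r \ge 2$. Consequently $\phi(t)\cdot(1 - (x-y)^{\phi(t)})$ returns $\phi(t)$ whenever $x - y$ lies in $p\Z_{p^r}$ (in particular when $x = y$) and returns $0$ whenever $x - y$ is a unit. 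Homomorphically, the ciphertext encrypting $(x-y)^{\phi(t)}$ is obtained via fast exponentiation in $\lceil \log_2 \phi(t) \rceil = O(\log t)$ ciphertext--ciphertext multiplications; the ensuing subtraction and scaling by $\phi(t)$ are plaintext--ciphertext operations outside our multiplication budget.

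The remaining subtlety is that this coarse formula agrees with $\phi(t)\,\mathsf{Compare}$ only up to spurious values at nonzero multiples of $p$. To refine it into the exact indicator at $x = y$, I would iteratively apply the divide-by-$p$ operation introduced earlier in this appendix: at level $i$ multiply the current residual by $(1 - z^{\phi(p^{r-i})})$ to annihilate its unit part, then divide by $p$ to descend to modulus $p^{r-i-1}$, and repeat until the modulus has shrunk to $p$, where Fermat's theorem identifies zero exactly. The partial indicators are then lifted back to $\Z_{p^r}$ via the inverse divide-by-$p$, conjoined multiplicatively, and scaled by $\phi(t)$. The main obstacle is the cost bookkeeping: iteration $i$ costs $O((r-i)\log p)$ multiplications, so the naive aggregate is $O(r\log t)$, matching the claimed $O(\log t)$ only in the practical BFV regime of small constant $r$; a uniform bound for arbitrary $r$ would require exploiting the geometric decrease of the exponents $\phi(p^{r-i})$ across levels in order to amortize the exponentiations, or alternatively invoking a single efficient digit-extraction polynomial at the bottom level.
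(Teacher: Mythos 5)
There is a genuine gap. The first half of your argument is correct but only produces the function $\phi(t)\cdot\bigl(1-(x-y)^{\phi(t)}\bigr)$, which equals $\phi(t)\cdot\mathbf{1}[x\equiv y \pmod p]$ rather than $\phi(t)\cdot\mathsf{Compare}(x,y)$: it outputs $\phi(t)$ at every difference in $p\mathbb{Z}_{p^r}$, not just at $x=y$. Eliminating those spurious acceptances is the entire content of the lemma, and that is exactly the part you leave as a sketch. Your proposed repair --- an iterated descent using divide-by-$p$, per-level unit annihilation, and a digit-extraction step at the bottom --- is not worked out (e.g., the homomorphic divide-by-$p$ map is only defined on plaintexts that actually lie in $p\mathbb{Z}_{p^r}$, so applying it to a residual that may be a unit needs care), and by your own accounting it costs $O(r\log t)$ multiplications plus expensive slot--coefficient transformations, so it does not establish the claimed $O(\log t)$ bound for general $r$. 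A proof that only reaches the stated complexity "in the practical regime of small constant $r$" does not prove the lemma as stated.

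The paper closes this gap with a one-shot algebraic trick that you are missing: multiply $e(x)=1-x^{\phi(t)}$ (which kills units, as you observed) by the second factor
\[
\prod_{u\in \mathbb{Z}_t^\times\setminus \{1\}}\bigl((x)+1-u\bigr)=\sum_{i=0}^{\phi(t)-1}(x+1)^i ,
\]
evaluated at $x=$ the difference. If $x$ is a \emph{nonzero} non-unit, then $x+1$ is a unit different from $1$, so the product form vanishes; at $x=0$ the sum form evaluates to $\phi(t)$; and the sum form is computable by the doubling identity $\sum_{i=0}^{2k-1}y^i=(1+y^k)\sum_{i=0}^{k-1}y^i$ in $O(\log\phi(t))=O(\log t)$ multiplications. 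The product $\alpha(x)=e(x)\cdot\sum_{i=0}^{\phi(t)-1}(x+1)^i$ is therefore exactly $\phi(t)\cdot\mathsf{Compare}$ and stays within the multiplication budget, with no recursion over the $p$-adic levels and no digit extraction. If you want to salvage your route, you would need to either find this (or an equivalent) polynomial that separates $0$ from the other multiples of $p$ in a single pass, or rigorously amortize the level-by-level exponentiations --- neither of which your current write-up does.
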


\begin{proof}
Let $e(x)=1-x^{\phi(t)}$, and define
\[
    \alpha(x):=e(x) \cdot \prod_{u\in \mathbb{Z}_t^\times\setminus \{1\}}(x+1-u).
\]
Note that
\[
\prod_{u\in \mathbb{Z}_t^\times\setminus \{1\}}(x+1-u)=\frac{(x+1)^{\phi(t)}-1}{(x+1)-1}=\sum_{i=0}^{\phi(t)-1}(x+1)^i,
\]
so $\alpha$ can be evaluated with $O(\log t)$ homomorphic multiplications.  
Now consider the following cases:
\begin{enumerate}
    \item If $x=0$, then $\alpha(0)=e(0)\cdot \phi(t)=\phi(t)$. 
    \item If $x\in \mathbb{Z}_t^\times$, then $e(x)=0$ and hence $\alpha(x)=0$.
    \item If $x$ is nonzero and non-unit, then $x+1\in \mathbb{Z}_t^\times\setminus\{1\}$, so  
    $\alpha(x)=e(x)\cdot\prod_{u}(x+1-u)=e(x)\cdot 0=0.$
\end{enumerate}
Thus $\alpha(x-y)=\phi(t)\cdot \textsf{Compare}(x,y)$.      
\end{proof}

By replacing $\mathsf{Eval}_{\sf Compare}$ with $\mathsf{Eval}_{\phi(t)\cdot\sf Compare}$ in Algorithm~\ref{alg: common value position}, we obtain:
\begin{itemize}
    \item For $i \in S^c$: $\ct_{{\sf CVS},i}$ is an encryption of $\phi(t)\cdot J_{S^c}$.
    \item For $i \in S$: $\ct_{{\sf CVS},i}$ is an encryption of $\phi(t)\cdot e_i$, where $e_i$ denotes the $i$th standard vector.
\end{itemize}

Consequently, the resulting ciphertext $\eval_f(\ct)+\ct_{{\sf CVS},i}$ constitutes a forgery with probability $1/2$.

\subsubsection{Increasing Attack Success Probability.}
We now modify Algorithm~3 to increase the attack success probability.  
Let $\ct_{{\sf bool}, i}$ denote the output obtained by applying the function $x \mapsto x-\phi(t)$ coefficient-wise to ${\sf RotSum}(\ct_{{\sf CVS},i})$.  
Then:
\begin{itemize}
\item For $i \in S^c$: $\ct_{{\sf bool}, i}$ is an encryption of $((\tfrac{n}{2}-1)\cdot\phi(t),\dots,(\tfrac{n}{2}-1)\cdot\phi(t))$.
\item For $i \in S$: $\ct_{{\sf bool}, i}$ is an encryption of $(0,\dots,0)$.
\end{itemize}

If $(\tfrac{n}{2}-1) \mid p$, then both cases degenerate to the zero vector, which causes a problem.  
For such exceptional parameter choices, one can resolve the issue by multiplying $(1,\dots,1)-e_j$ to $\ct_{{\sf CVS},i}$ for some $j\neq i$, and then applying the above procedure.  
This yields $(\tfrac{n}{2}-2)\cdot\phi(t)$ for $i\in S^c$ instead, thereby avoiding the problem.  
In the following, we assume without loss of generality that $(\tfrac{n}{2}-1) \nmid p$.

When attempting to multiply the resulting ciphertext with $\ct_{{\sf CVS}, i}$ as in Algorithm~3, an issue arises.  
Since $\ct_{{\sf CVS}, i}$ encrypts multiples of $\phi(t)$ in its slots, multiplying it with the output $\ct_{{\sf bool}, i}$ reduces the plaintext to $0$. We can apply the previously described `divide-by-$p$' technique here.  
Since both $\ct_{{\sf bool}, i}$ and $\ct_{{\sf CVS}, i}$ contain plaintexts that are multiples of $p^{r-1}$, we first divide them by $p^{r-1}$ while setting the plaintext space to $\mathbb{Z}_p$.  

As a result, let $\ct_{{\sf mult}, i}$ denote the ciphertext obtained by this procedure:  
\begin{itemize}
\item For $i \in S^c$: $\ct_{{\sf mult}, i}$ is an encryption of $(\tfrac{n}{2}-1)(p-1)^2\cdot J_{S^c} = (\tfrac{n}{2}-1)\cdot J_{S^c}$.  
\item For $i \in S$: $\ct_{{\sf mult}, i}$ is an encryption of the zero vector.  
\end{itemize}

We can further increase the forgery success probability by summing $\ct_{{\sf mult}, i}$ over $i$ as in Theorem~\ref{lem: attack on rep}. \\

However, if the $p$ is small enough, an edge case arises when the number of indices $i \in S^c$ is a multiple of $p$.  
Since the summation is performed over $\mathbb{Z}_p$, this issue can be avoided when $p \neq 2$ by applying normalization after each addition.  
Concretely, one can apply a polynomial $q\in\bb Z_p[X]$ such that $q(0)=0$ and $q(\tfrac{n}{2}-1)=q(2(\tfrac{n}{2}-1))=\tfrac{n}{2}-1$ so that encryptions of $(\tfrac{n}{2}-1)J_{S^c}$ remains unchanged under repeated additions. 

For the case $p=2$, it is still possible to perform the `divide-by-$p^{r-1}$' 
operation without reducing the plaintext modulus. As noted above, an encryption of $J_{S^c} \pmod{2}$ can be regarded 
as an encryption of $2^{r-1}\cdot J_{S^c} \pmod{2^r}$. 
By dividing by $2^{r-1}$, we obtain an encryption of $J_{S^c} \pmod{2^r}$. 
For typical parameters (e.g., $r \ge 16$ and $n \leq128$), the summation of $J_{S^c} \pmod{2^r}$ does not exceed $2^r$.

\subsection{Attacks on $\REP^{\sf msk}$}
The only difference from the main text lies in the evaluation of the pseudorandom characteristic function in Lemma~\ref{bfvprf}.  
As in the main text, let $U_{t,d} = \{x: x^{\phi(t)/d}=1\}$.

\begin{lemma} For $t=p^r$ and $d \mid (p-1)$, we have
    \[
        d\cdot \chi_{U_{t,d}}(x)=x^{\phi(t)}\cdot\sum_{i=0}^{d-1} \left(x^{\phi(t)/d}\right)^i.
    \]
\end{lemma}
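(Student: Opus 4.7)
The plan is to follow the structure of the proof of Lemma~\ref{bfvprf}, partitioning $\bb Z_t$ into three disjoint subsets and verifying the identity on each. The subsets are $U_{t,d}$, the remaining units $\bb Z_t^\times \setminus U_{t,d}$, and the non-units (i.e., those $x$ with $p \mid x$, including $x=0$). Compared with the prime case, the new ingredient is the factor $x^{\phi(t)}$ on the right-hand side, which exists precisely to annihilate the new third class of elements that $\bb Z_{p^r}$ contains but $\bb Z_p$ does not.

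For $x \in U_{t,d}$, Euler's theorem gives $x^{\phi(t)}=1$ and the definition of $U_{t,d}$ gives $x^{\phi(t)/d}=1$, so both sides reduce to $d$. For a non-unit $x$, the left-hand side vanishes since $\chi_{U_{t,d}}$ is supported on $\bb Z_t^\times$; the right-hand side vanishes because $v_p(x^{\phi(t)}) = \phi(t)\cdot v_p(x) \geq (p-1)p^{r-1} \geq r$, so $x^{\phi(t)} = 0$ in $\bb Z_{p^r}$. This confirms that the extra factor $x^{\phi(t)}$ is exactly what is needed to patch the formula for the nonprime modulus.

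The main obstacle is the remaining case $x \in \bb Z_t^\times \setminus U_{t,d}$, where the goal is to show $\sum_{i=0}^{d-1}(x^{\phi(t)/d})^i = 0$. Setting $y:=x^{\phi(t)/d}$, Euler's theorem yields $y^d = x^{\phi(t)}=1$, and the assumption $x \notin U_{t,d}$ gives $y \neq 1$, so $(y-1)\sum_{i=0}^{d-1} y^i = y^d - 1 = 0$. Unlike the proof for prime $t$, cancelling $y-1$ is nontrivial because $\bb Z_{p^r}$ is not an integral domain; the crux is therefore to argue that $y-1$ is a unit. Since $d \mid p-1$, the polynomial $Z^d-1$ has $d$ distinct roots modulo $p$ (its derivative $dZ^{d-1}$ does not vanish on units), and Hensel's lemma lifts these uniquely to $d$ roots in $\bb Z_{p^r}$. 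Hence the reduction map from the group of $d$-th roots of unity in $\bb Z_{p^r}^\times$ to that in $\bb Z_p^\times$ is a bijection, so $y \equiv 1 \pmod{p}$ would force $y=1$ in $\bb Z_{p^r}$, contradicting $y\neq 1$. Therefore $p \nmid y-1$, making $y-1$ a unit, and multiplying through by its inverse yields the desired vanishing.
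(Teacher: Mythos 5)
Your proof is correct, and it follows the paper's three-way decomposition of $\mathbb{Z}_{p^r}$ (the set $U_{t,d}$, the remaining units, the non-units) with identical treatments of the first two cases, including the observation that the new factor $x^{\phi(t)}$ exists to kill the non-units. The only genuine divergence is in the crucial cancellation step for $x \in \mathbb{Z}_t^\times \setminus U_{t,d}$: starting from $(y-1)\sum_{i=0}^{d-1} y^i = 0$ with $y = x^{\phi(t)/d} \neq 1$, the paper argues by contradiction that the two factors cannot both be zero divisors --- if $y \equiv 1 \pmod p$ then $\sum_i y^i \equiv d \pmod p$, which is a unit since $d \mid p-1$ --- whereas you prove the stronger statement that $y-1$ is itself a unit, by using Hensel's lemma to show that the $d$-th roots of unity in $\mathbb{Z}_{p^r}^\times$ reduce bijectively onto those in $\mathbb{Z}_p^\times$, so $y \equiv 1 \pmod p$ would force $y = 1$. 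Both arguments are sound and rest on the same arithmetic fact ($d \not\equiv 0 \bmod p$); the paper's is more elementary and self-contained, while yours buys a cleaner structural statement (direct invertibility of $y-1$, and a description of the $d$-th roots of unity) at the cost of invoking Hensel lifting. One cosmetic remark: your parenthetical about the derivative $dZ^{d-1}$ justifies simplicity of the roots mod $p$, but the count of $d$ distinct roots comes from cyclicity of $\mathbb{Z}_p^\times$ and $d \mid p-1$; this is immaterial to correctness.
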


Before the proof, note that $a\in \mathbb{Z}_{p^r}$ is a zero divisor if and only if $a \equiv 0\pmod p$.

\begin{proof}
\begin{enumerate}
    \item Suppose $x$ is a non-unit. Then the left-hand side (LHS) is zero.  
    Since $x$ is a multiple of $p$, the right-hand side (RHS) is also zero, as it becomes a multiple of $p^{\phi(t)}\equiv0 \pmod{p^r}$ ($r\leq (p-1)p^{r-1}=\phi(t)$)
    \item Suppose $x\in U_{t,d}$. Since $x$ is a unit, $x^{\phi(t)}=1$. Also $x^{\phi(t)/d}=1$ by definition of $U_{t,d}$. Thus both LHS and RHS equal $d$.
    \item Suppose $x\in \mathbb{Z}_t^\times\setminus U_{t,d}$. Since $x$ is a unit, $x^{\phi(t)}=1$. Moreover $x^{\phi(t)/d}\neq 1$ by definition, while
    \[
        (x^{\phi(t)/d})^d-1=(x^{\phi(t)/d}-1)\left(\sum_{i=0}^{d-1} (x^{\phi(t)/d})^i\right)=0.
    \]
    Hence $\sum_{i=0}^{d-1} (x^{\phi(t)/d})^i=0$, unless both $x^{\phi(t)/d}-1$ and the sum are zero divisors.  
    To rule out this second case, let $y=x^{\phi(t)/d}$. If $y-1$ is a zero divisor, then $y=pm+1$ for some integer $m$, which implies $\sum_{i=0}^{d-1} y^i\equiv d\not\equiv0 \pmod p$.  
    Thus it is not a zero divisor, a contradiction. Hence RHS $=0$.
\end{enumerate} 
\end{proof}

This completes the extension of Lemma~\ref{bfvprf} to the case $t=p^r$. 
As noted in the main text, one can further combine such characteristic functions to efficiently construct another characteristic function whose support size approximates a desired value.
In the case $p=2$, however, the lemma does not yield a meaningful result. 
We leave the study of efficient pseudorandom characteristic function evaluation for this case as future work.\\

We note that the cryptanalysis on $\PE$ in Section~\ref{sec:PE} naturally extends to the case $t=p^r$, 
and thus requires no further explanation.

\end{document}